\newbox\BA@first@box
\newtheorem{Thm}{Theorem}
\newtheorem{Lem}{Lemma}
\newtheorem{Cor}{Corollary}
\newtheorem{Prop}{Proposition}
\newtheorem{Exm}{Example}
\newtheoremstyle{noCaption}
{\topsep}
{\topsep}
{\itshape}
{}
{}
{}
{0pt}
{}%
\begin{document}
\title{{\color{black} Small sets of genuinely nonlocal GHZ states in multipartite systems} }
\author{Zong-Xing Xiong$^{1}$\footnote{hongbe123@outlook.com}, Yongli Zhang$^{2}$, Mao-Sheng Li$^{3}$, Lvzhou Li$^{1}$\footnote{lilvzh@mail.sysu.edu.cn}\\
{\footnotesize{\it $^1$ Institute of Quantum Computing and Software, School of Computer Science and Engineering,}}\\
{\footnotesize{\it Sun Yat-Sen University, Guangzhou 510006, China}}\\
{\footnotesize{\it $^2$ School of Mathematics and Systems Science, Guangdong Polytechnic Normal University, Guangzhou 510665, China}}\\
{\footnotesize{\it $^3$ School of Mathematics, South China University of Technology, Guangzhou 510641, China}}\\
}

\affiliation{
}

\begin{abstract}
{\color{black} A set of orthogonal multipartite quantum states are called (distinguishability-based) genuinely nonlocal if they are locally indistinguishable across any bipartition of the subsystems. In this work, we consider the problem of constructing small genuinely nonlocal sets consisting of generalized GHZ states in multipartite systems. For system $(\mathbb{C}^2)^{\otimes N}$ where $N$ is large, using the language of group theory, we show that a tiny proportion $\Theta(1/\sqrt{2^N})$ of states among the $N$-qubit GHZ basis suffice to exhibit genuine nonlocality. Similar arguments also hold for the canonical generalized GHZ bases in systems $(\mathbb{C}^d)^{\otimes N}$, wherever $d$ is even and $N$ is large. What is more, moving to the condition that any fixed $N$ is given, we show that $d+1$ genuinely nonlocal generalized GHZ states exist in $(\mathbb{C}^d)^{\otimes N}$, provided the local dimension $d$ is sufficiently large. As an additional merit, within and beyond an asymptotic sense, the latter result also indicates some evident limitations of the ``trivial othogonality--preserving local measurements'' (TOPLM) technique that has been utilized frequently for detecting genuine nonlocality. }

\end{abstract}
\pacs{03.  67.  Hk,  03.  65.  Ud }
\maketitle

\section{Introduction} \label{section1}

Quantum nonlocality, as one of the most fascinating phenomenon in quantum mechanics, is most well-known to the public in the manifestation of Bell nonlocality \cite{Bell1964,Brunner2014}. To exhibit such form of nonlocality, the existence of quantum entanglement is always necessary. However, there are also other forms of nonlocality, including one that is based on local distinguishability of multipartite quantum states.  It serves to explore the fundamental properties about locally accessing global information and unlike Bell nonlocality, entanglement is no more necessarily required. Such nonlocality was first revealed by Bennett and his co-workers \cite{Bennett99}, who constructed nine orthogonal product states in $\mathbb{C}^3 \otimes \mathbb{C}^3$ that are not perfectly distinguishable by the two separated parties, provided only local operations and classical communication (LOCC) are allowed. Since this seminal work, such form of distinguishability-based nonlocality has been studied extensively (see \cite{Bennett1999,Walgate00,Walgate02,Ghosh01,Ghosh02,DiVicenzo03,Horodecki03,Ghosh04,Fan04,Hayashi06,Nathanson05,Bandyopadhyay11,Yu12,Nathanson13,Cosentino13,CosentinoR14,Tian15,Tian2015,Li15,Yu15,Singal17,Xu16, Zhang17, Halder18, Jiang20, Zhen22, Cao23} for an incomplete list).

Probabilisticly, a set of more quantum states might usually tend to be harder for distinguishing while a set with less states is often more likely to be distinguishable. Informally speaking, the reason is that all supersets of an indistinguishable set must be indistinguishable,  whereas all subsets of a distinguishable one are also distinguishable. Therefore, to get some nontrivial knowledge about such form of nonlocality, one is either interested in the maximal number of states that retain locality, or interested in the minimal number of states that exhibit nonlocality. For two-partite systems, Hayashi et al. discovered that the maximal number of orthogonal pure states that are locally distinguishable cannot exceed the total dimension over the average entanglement of the states \cite{Hayashi06}. This result, to some extent, indicates that quantum states with more entanglement are generally (while not always) more difficult to be distinguished. \ Inspite of such a quantitative upper bound in the direction of retaining locality, we have, on the other hand, little idea about the minimal number of states that can exhibit nonlocality. In the literature, lots of efforts have been paid in seeking small locally indistinguishable sets consisting of maximally entangled states \cite{Nathanson05,Bandyopadhyay11,Yu12,Nathanson13,Cosentino13,CosentinoR14,Tian15,Tian2015,Li15,Yu15}, which are widely believed to have stronger nonlocality than any other bipartite quantum states. Frustratingly, even a simple problem whether there exist three locally indistinguishable maximally entangled states in $\mathbb{C}^d \otimes \mathbb{C}^d$ for $d > 3$ remains unsolved today. What is known is that any two orthogonal pure states can always be locally distinguished, no matter whether the states are entangled or not \cite{Walgate00}.  Apart from these, in the multipartite scenarios, there are also a series of works exploring small locally indistinguishable sets of multipartite product states \cite{Xu16, Zhang17, Halder18, Jiang20, Zhen22, Cao23}. In the most recent work \cite{Cao23} particularly, Cao et al. showed the existence of $d + 1$ locally indistinguishable product states in $(\mathbb{C}^d)^{\otimes N}$, outperforming results of the others. Nevertheless, in the context of multipartite (distinguishability-based) nonlocality, there are other stronger paradigms that have been proposed.

A set of orthogonal multipartite quantum states are called genuinely nonlocal if they are locally indistinguishable across any bipartition of the subsystems. Surely, genuine nonlocality derives local indistinguishability immediately and it is by definition a much stronger form of nonlocality than local indistinguishability. In \cite{Rout19, Halder19, Li21,Rout21, shi, xiong23}, several techniques have been applied for constructing genuinely nonlocal sets with different types of multipartite quantum states. {\color{black} Among them, the so called ``trivial orthogonality--preserving local measurements'' (TOPLM, whose definition shown in Section \ref{section2}), which was originated from Walgate and Hardy \cite{Walgate02}, is the most widely applicable one. In \cite{Halder19}, such technique was first utilized by Halder et al. for deriving genuinely nonlocal sets of product states in several small systems. As a matter of fact, the notion put forward by Halder et al. is ``strong nonlocality'', which refers to local irreducibility of a set of multipartite quantum states through any bipartition of the subsystems. A set of orthogonal multipartite quantum states are called locally irreducible if it is impossible to eleminate one or more states from the whole set, with restriction that only orthogonality--preserving local measurements (OPLM) are allowed. In practice however, it is often difficult to determine whether a set of states are locally irreducible, except those cases where only ``trivial orthogonality--preserving local measurements'' can be performed by the subsystems. By far, though lots of efforts have been paid in seeking strongly nonlocal sets in the literature (see \cite{Zhang19, Yuan20, Shi20, Wang21, Shi2022, Shi22, Zhou22, Zhou23, Li23} for an incomplete list), all existing examples are constructed through the TOPLM technique \cite{strongest nonlocality}.  Recently, it has been proved by Li and Wang that in system $(\mathbb{C}^d)^{\otimes N}$, all strongly nonlocal sets constructed in this way must have cardinality no smaller than $d^{N - 1} + 1$ \cite{Li23}. Notice that by definition, local irreducibility is nothing other than a sufficient condition for deriving local indistinguishability and arguably, so is strong nonlocality with respect to genuine nonlocality.}

{\color{black}
In this work, we study the problem of constructing small genuinely nonlocal sets consisting of generalized GHZ states in multipartite system. We first consider systems $(\mathbb{C}^d)^{\otimes N}$ on condition that the $d$ is fixed and  $N$ is large. For the $N$-qubit GHZ bases, we show that a tiny proportion {\small $\Theta(1/\sqrt{2^{N}})$} of the states among such bases suffice to exhibit genuine nonlocality. Similar arguments also hold for the canonical generalized GHZ bases in $(\mathbb{C}^d)^{\otimes N}$, wherever the local dimension $d$ are even. As for the case where $N$ is fixed and $d$ is large, we show the existence of $d+1$ genuinely nonlocal generalized GHZ states in $(\mathbb{C}^d)^{\otimes N}$. We argue that, within and beyond an asymptotic sense, this result also indicates some evident limitations of the TOPLM techique for detecting small genuinely nonlocal sets.}

The rest of this paper is organized as follows: Section \ref{section2} provides some relevant definitions and notations. In Section \ref{section3}, we investigate genuine nonlocality of the canonical generalized GHZ bases in $(\mathbb{C}^d)^{\otimes N}$, in case that $d$ is even. In Section \ref{section4}, we demonstrate how to construct $d+1$ genuinely nonlocal generalized GHZ states in $N$-partite systems $(\mathbb{C}^d)^{\otimes N}$. We draw our conclusion and discuss some relating problems in Section \ref{section5}.

\section{preliminaries} \label{section2}

In this section, we provide some relevant definitions and notations that is necessary in this paper.

\smallskip
\textit{Local distinguishability (LOCC-distinguishability)}:
\ A set of orthogonal multi-partite quantum states, which is priorly known to several spatially separated parties, is said to be locally distinguishable if the parties are able to tell exactly which state they share through some protocols, provided only local operations (measurements) and classical communications (LOCC) are allowed.

\medskip
\textit{Local irreducibility}:
A set of multipartite orthogonal quantum states is said to be locally irreducible if it is not possible to locally eliminate one or more states from the set while preserving orthogonality of the postmeasurement states. Typical examples of locally irreducible sets include the two-qubit Bell basis and the $N$-qubit GHZ basis \cite{Halder19}.

\medskip
\textit{Genuine nonlocality}:
\ A set of orthogonal multipartite quantum states is called (distinguishability-based) {\em genuinely nonlocal} if the states are locally indistinguishable across any bipartition of the subsystems.

\medskip
\textit{Strong nonlocality}:
\ A set of orthogonal multipartite quantum states is called {\em strongly nonlocal} if the states are locally irreducible across any bipartition of the subsystems.

\medskip
{\color{black}
\textit{Trivial orthogonality--preserving local measurements (TOPLM)}: \ A measurement is called nontrivial if not all the POVM elements are proportional to the identity. Otherwise, we call the measurement trivial. \ In any local discrimination protocol, one of the parties must go first and whoever goes first must be able to perform some nontrivial orthogonality--preserving local measurements. Therefore, for a set of orthogonal multipartite quantum states, \ if only trivial orthogonality--preserving local measurements (TOPLM) can be performed by each of the parties, then the states must be locally indistinguishable (irreducible). In recent literature, TOPLM has been frequently utilized as a technique for detecting genuinely nonlocal (strongly nonlocal) sets \cite{Halder19,Zhang19, Yuan20, Shi20, Wang21, Shi2022, Shi22, Zhou22, Zhou23, Li23}.
}


\medskip
\textit{PPT-distinguishability}:
\ In the literature, since the mathematical structure of LOCC measurements are rather complicated, they are usually approximated by separable measurements \cite{Bandyopadhyay15} or PPT (positive-partial-transpose) measurements \cite{Yu14}. A positive semidefinite operator $0 \leq M \in \mathrm{Pos}(\mathcal{H}_\mathrm{A} \otimes \mathcal{H}_\mathrm{B})$ is called ``PPT'', if its partial transpose about one subsystem (say, $\mathrm{A}$) is also positive semidefinite: $M^{\mathrm{T_A}} \geq 0$. Since LOCC measurement opeartors (LOCC-POVMs) are separable and separable operators are PPT, \ PPT-indistinguishability implies local indistinguishability immediately.

\medskip
\textit{$N$-partite generalized GHZ states}:
\ In $N$-partite system $\mathcal{H}_{\mathrm{A}_1} \otimes \mathcal{H}_{\mathrm{A}_2} \otimes \cdots \otimes \mathcal{H}_{\mathrm{A}_N}$ where $\mathcal{H}_{\mathrm{A}_1} = \mathcal{H}_{\mathrm{A}_2} = \cdots = \mathcal{H}_{\mathrm{A}_N} = \mathbb{C}^d$, \ quantum states like
\begin{equation*}
\frac{1}{\sqrt{d}} \sum_{j=0}^{d-1} \ \left|\zeta_j^{(1)} \zeta_j^{(2)} \cdots \zeta_j^{(N)}\right>_{\mathrm{A}_1 \mathrm{A}_2 \cdots \mathrm{A}_N}
\end{equation*}
are called {\em generalized GHZ states},  where each $\{|\zeta_j^{(n)}\rangle\}_{j=0}^{d-1}$ is an arbitrary set of orthogonal basis for the subsystem $\mathcal{H}_{\mathrm{A}_n}$ $(1 \leq n \leq N)$ .

\medskip
In fact, the discussion about genuine nonlocality of the GHZ states can be dated back to Hayashi et al.'s work \cite{Hayashi06}. Here we reproduce their result with the following lemma:

\begin{Lem}\label{lemma1}\textnormal{\cite{Hayashi06}}
\ In $N$-partite system $(\mathbb{C}^d)^{\otimes N}$, any $s \geq d^{N-1} + 1$  orthogonal generalized GHZ states are genuinely nonlocal.
\end{Lem}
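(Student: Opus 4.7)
The plan is to reduce the claim to the bipartite Schmidt-rank bound of Hayashi et al.\ mentioned earlier in the introduction: if $n$ orthogonal pure states with Schmidt ranks $r_1,\ldots,r_n$ in $\mathbb{C}^{d_A}\otimes\mathbb{C}^{d_B}$ are LOCC--distinguishable, then $\sum_{i=1}^{n} r_i \leq d_A d_B$. Since every generalized GHZ state attains the (maximal) Schmidt rank $d$ across any bipartition, this inequality will immediately translate into a counting bound on how many such states can be locally distinguished.

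First, I would fix an arbitrary bipartition $A|\bar{A}$ of the $N$ parties with $|A|=k$, so that the global space is $\mathbb{C}^{d^{k}}\otimes \mathbb{C}^{d^{N-k}}$ of total dimension $d^N$. For any generalized GHZ state $|\psi\rangle = \frac{1}{\sqrt{d}}\sum_{j=0}^{d-1} |\zeta_j^{(1)}\cdots\zeta_j^{(N)}\rangle$, let $|\alpha_j\rangle := |\zeta_j^{(1)}\cdots\zeta_j^{(k)}\rangle_A$ and $|\beta_j\rangle := |\zeta_j^{(k+1)}\cdots\zeta_j^{(N)}\rangle_{\bar A}$; regrouping the tensor factors gives
\[
|\psi\rangle \;=\; \frac{1}{\sqrt{d}}\sum_{j=0}^{d-1} |\alpha_j\rangle_A \otimes |\beta_j\rangle_{\bar A}.
\]
Since each $\{|\zeta_j^{(n)}\rangle\}_j$ is orthonormal, both $\{|\alpha_j\rangle\}_j$ and $\{|\beta_j\rangle\}_j$ are orthonormal families, so the display above is already a Schmidt decomposition with $d$ equal nonzero coefficients; hence the Schmidt rank of $|\psi\rangle$ across $A|\bar{A}$ is exactly $d$.

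Finally, suppose for contradiction that some $s\geq d^{N-1}+1$ orthogonal generalized GHZ states were LOCC--distinguishable across the chosen bipartition. Applying the Hayashi et al.\ bound with every Schmidt rank equal to $d$ yields $sd \leq d^{N}$, i.e., $s \leq d^{N-1}$, contradicting the hypothesis. Therefore the set is locally indistinguishable across $A|\bar{A}$, and because $A|\bar{A}$ was arbitrary, the states are genuinely nonlocal. The argument is essentially bookkeeping; the only nontrivial ingredient is the quoted Schmidt-rank inequality of Hayashi et al.\ (valid already against separable, hence LOCC, measurements), so I do not anticipate any real obstacle beyond invoking it correctly.
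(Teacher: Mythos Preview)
Your argument is correct. Note, however, that the paper does not actually supply its own proof of this lemma: it is stated as a citation of Hayashi et al.\ \cite{Hayashi06} and used as a black box. Your reconstruction---computing the Schmidt rank of a generalized GHZ state across an arbitrary cut and then invoking the Schmidt-rank sum bound $\sum_i r_i \le d_A d_B$ for LOCC-distinguishable pure states---is precisely the argument underlying the cited result, so there is nothing further to compare.
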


In other words, $d^{N-1}$ is the maximal number of orthogonal generalized GHZ states that retain locality --- locally distinguishable in at least one bipartition. In particular,  for the multi-qubit case (namely, when $d=2$), Bandyopadhyay showed further that this upper bound is tight \cite{Bandyopadhyay10}. On the other side however, what is the minimal number of states that can exhibit genuine nonlocality?  In the simplest multi-qubit case, for either $N = 2$ or $N = 3$, such a minimal number turns out to be $2^{N-1} + 1$ ({\color{black} For the case $N=2$, where genuine nonlocality is just local indistinguishability by definition and the ``two-qubit GHZ basis'' is nothing other than the Bell basis in $\mathbb{C}^{2} \otimes \mathbb{C}^{2}$, such a statement is obvious; For $N=3$, it follows from Proposition 1 of \cite{xiong23}}). That is, the upper bound and lower bound ``encounter'' here. Is this still true for cases $N > 3$? Besides, what is the situation when $d > 2$?  Such problems will be addressed in the following sections.

\bigskip

\section{ Genuine nonlocality for the canonical generalized GHZ bases}\label{section3}

In this section, we investigate genuine nonlocality of a special form of generalized GHZ states: the canonical generalized GHZ bases, whose definition is shown below.

\smallskip
\textit{Canonical generalized GHZ basis}:
In $N$-partite system $(\mathbb{C}^{d})^{\otimes N}$ where $\{|0\rangle, \cdots, |d-1\rangle\}$ is the standard orthogonal basis for each subsystem, the $d^N$ orthogonal generalized GHZ states
\begin{equation}\label{canonical}
\frac{1}{\sqrt{d}} \sum_{j=0}^{d-1} \omega_{d}^{j(k-1)}  \left|j, j\oplus i_{2}, \phantom{^{^{1}}} \cdots , j\oplus i_{N}\right>_{\mathrm{A}_1\mathrm{A}_2\cdots\mathrm{A}_N} \ \ \ \ \ \
\end{equation}
where $k \in \{1, \cdots, d\}$ and $i_2, \cdots, i_{N} \in \{0, \cdots, d-1\}$ consist a set of basis for the global system. Here, $\omega_d = e^{\frac{2\pi i}{d}}$ and ``$\oplus$'' is the ``mod $d$'' addition. They are called the {\em canonical generalized GHZ basis} and states in such form are called {\em generalized GHZ states in canonical form}. Notice that for $d =2$,  such definition coincides with the ordinary ``$N$-qubit GHZ basis''.

\medskip
In what follows, we show that for such bases, just a tiny proportion {\small $\Theta[1/(\frac{d}{\sqrt{2}})^{N}]$} of the whole set suffices to exhibit genuine nonlocality, on condition that $N$ grows large. Herein, we write $f(n) = \Theta[g(n)]$ if there exist positive constants $n_0$ and $c_1, c_2$ such that for $n \geq n_0$, we have $c_1 \cdot g(n) \leq f(n) \leq c_2 \cdot g(n)$. For convenience of explanation, we divide our discussion into three parts.
\\

\begin{table*}[t]
\caption{ \ Genuine nonlocality for subset $\mathcal{N}_7^{(4)}$ of the four-qubit GHZ basis}\label{tb1}
\begin{tabular}{|c|c|c|c|c|}
\hline
                  & $\psi_{0000, 1111}$ & $\psi_{0010}$ & $\psi_{0100}$ & $\psi_{1000}$ \\
\hline
$\psi_{0000, 1111}$ & {\footnotesize \bf null} & ABD$|$C       &  ACD$|$B      &  BCD$|$A      \\
\hline
$\psi_{0001, 1110}$ & ABC$|$D             & AB$|$CD       &  AC$|$BD      &  BC$|$AD      \\
\hline
\end{tabular}
\end{table*}
\begin{table*}[t]
\caption{ \ Genuine nonlocality for subset $\mathcal{N}_{11}^{(5)}$ of the five-qubit GHZ basis}\label{tb2}
\begin{tabular}{|c|c|c|c|c|c|c|c|c|}
\hline
   & $\phi_{00000, 11111}$ & $\phi_{00010}$ & $\phi_{00100}$ & $\phi_{01000}$ & $\phi_{10000}$ & $\phi_{00110}$ & $\phi_{01010}$ & $\phi_{10010}$ \\
\hline
$\phi_{00000, 11111}$ & {\footnotesize \bf null} &  ABCE$|$D  &  ABDE$|$C  &  ACDE$|$B  &  BCDE$|$A     &   ABE$|$CD     &   ACE$|$BD     &   BCE$|$AD     \\
\hline
$\phi_{00001, 11110}$ &   ABCD$|$E        &  ABC$|$DE  &  ABD$|$CE  &  ACD$|$BE  &  BCD$|$AE     &   AB$|$CDE     &   AC$|$BDE     &   BC$|$ADE     \\
\hline
\end{tabular}
\end{table*}

\medskip

\centerline{\bf 1. A straightforward construction}
\medskip

For the canonical generalized GHZ basis (\ref{canonical}) in system $(\mathbb{C}^{d})^{\otimes N}$ ($d \geq 2$), there is a quite straightforward construction of genuinely nonlocal subset that is shown below.
\begin{Prop}\label{1d}
\ In $N$-partite system $(\mathbb{C}^{d})^{\otimes N}$, the following subset of the canonical generalized GHZ basis (\ref{canonical}):
\begin{equation*}\begin{split}
& |\Gamma_k^{(\ast)}\rangle = \frac{1}{\sqrt{d}} \sum_{j=0}^{d-1} \ \omega_d^{j(k-1)} \ |j j \cdots j\rangle_{\mathrm{A}_1\mathrm{A}_2 \cdots \mathrm{A}_N},  \\
& \phantom{|\Gamma_k^{(\ast)}\rangle \sum_{j=0}^{d-1} \omega_d^{kj} {\mathrm{A}_1\mathrm{A}_2 \cdots \mathrm{A}_N} } \ \  (\omega_d = e^{2\pi i/d}; \  k = 1, \cdots, d)\\
& |\Gamma^{S}\rangle = \frac{1}{\sqrt{d}} \sum_{j=0}^{d-1} \ |j \cdots j\rangle_{S} |j\oplus 1 \cdots j \oplus 1\rangle_{\overline{S}}, \\
& \phantom{|\Gamma_k^{(\ast)}\rangle \sum_{j=0}^{d-1} \omega_d^{kj}|j\rangle_{\mathrm{A}_1} \otimes } (S \subset \{\mathrm{A}_1, \cdots, \mathrm{A}_N\}, \ 1 \leq |S| < \frac{N}{2}) \\
& \text{and (when $N$ is even)}\\
& |\Gamma^{R}\rangle = \frac{1}{\sqrt{d}} \sum_{j=0}^{d-1} \ |j \cdots j\rangle_{R} |j\oplus 1 \cdots j \oplus 1\rangle_{\overline{R}}, \\
& \phantom{|\Gamma_k^{(\ast)}\rangle \sum_{j=0}^{d-1} \omega_d^{kj}|j\rangle_{A_1} \ } (\mathrm{A}_1 \in R \subset \{\mathrm{A}_1, \cdots, \mathrm{A}_N\}, \ |R| = \frac{N}{2})\\
\end{split}\end{equation*}
is genuinly nonlocal. The cardinality of this subset is $d+2^{N-1}-1$.
\end{Prop}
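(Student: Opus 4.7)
The plan is to prove that the given set of $d + 2^{N-1} - 1$ states is locally indistinguishable across every bipartition $\mathcal{A}|\bar{\mathcal{A}}$ of the $N$ parties, thereby establishing genuine nonlocality. I fix such a bipartition and examine the Schmidt structure of the states restricted to it.

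Each state has Schmidt rank exactly $d$ across any bipartition. The Schmidt basis on $\mathcal{A}$ is $\{\ket{j}^{\otimes |\mathcal{A}|} : 0 \leq j \leq d-1\}$ for $\ket{\Gamma_k^{(\ast)}}$, and $\{\ket{j}^{\otimes (S \cap \mathcal{A})} \otimes \ket{j \oplus 1}^{\otimes (\mathcal{A} \setminus S)} : 0 \leq j \leq d-1\}$ for $\ket{\Gamma^S}$. The key combinatorial observation is that, as $S$ varies over the collection $\mathcal{S}$ of subsets used in the proposition, the resulting Schmidt bases exhaust all characteristic configurations on $\mathcal{A}$: every nonempty proper subset $T \subseteq \mathcal{A}$ (up to complementation within $\mathcal{A}$) is realized as $S \cap \mathcal{A}$ for some $S \in \mathcal{S}$. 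The tie-breaking condition $A_1 \in R$ on the $|R| = N/2$ sets (when $N$ is even) is tailored to maintain this exhaustive coverage for the balanced bipartition while keeping the stated cardinality.

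I would then apply the TOPLM technique, attempting to show that any orthogonality-preserving POVM element $E$ on the $\mathcal{A}$ side is forced to be proportional to identity. The constraints come from the equations $\bra{\psi_i}(E \otimes I)\ket{\psi_j} = 0$ for each pair $i \neq j$, and split into three types: (i) pairs $(\ket{\Gamma_k^{(\ast)}}, \ket{\Gamma_{k'}^{(\ast)}})$ with $k \neq k'$, for which a discrete Fourier transform argument over $\mathbb{Z}_d$ forces the diagonal entries $\bra{j \ldots j}E\ket{j \ldots j}$ to be equal in $j$; (ii) pairs $(\ket{\Gamma_k^{(\ast)}}, \ket{\Gamma^S})$, where another DFT argument over the phases $\omega_d^{k-1}$ forces the cyclic off-diagonal entries of $E$ linking $\{\ket{j \ldots j}\}$ to the Schmidt basis of $\ket{\Gamma^S}$ to vanish; and (iii) pairs $(\ket{\Gamma^S}, \ket{\Gamma^{S'}})$ with different Schmidt bases on $\mathcal{A}$, yielding further off-diagonal constraints via the overlap structure on $\bar{\mathcal{A}}$.

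The main obstacle, I anticipate, is showing that these three groups of constraints collectively pin $E$ down to a scalar, especially in the balanced bipartition case where some Schmidt-basis equivalence classes may contain only a single state from the set and thus contribute no intra-class constraints. Here the cross-class constraints must do all the work, and their effectiveness relies on the exhaustive coverage described above together with the $A_1$-asymmetry built into $\mathcal{S}$. If TOPLM alone proves insufficient for some bipartition --- as may happen when $d$ is small --- the argument can fall back on Lemma~\ref{lemma1}: for $d = 2$, the cardinality $d + 2^{N-1} - 1 = 2^{N-1} + 1$ exactly matches the Hayashi threshold $d^{N-1} + 1$, so genuine nonlocality follows immediately from the lemma.
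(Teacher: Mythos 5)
Your central strategy --- pinning down every orthogonality-preserving POVM element $E$ on one side of each bipartition to a multiple of the identity --- cannot succeed for this set, and the paper itself explains why in Section IV: by the result of Li and Wang cited as \cite{Li23}, any set in $(\mathbb{C}^d)^{\otimes N}$ for which only trivial orthogonality-preserving local measurements exist in every bipartition must contain at least $d^{N-1}+1$ states. The set in Proposition \ref{1d} has only $d+2^{N-1}-1$ states, which is strictly below that threshold for all $d\geq 3$, $N\geq 3$. So for those parameters there necessarily exists a nontrivial orthogonality-preserving local measurement in some bipartition, and no amount of combining your constraint types (i)--(iii) will force $E\propto I$. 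TOPLM establishes local irreducibility, which is strictly stronger than the local indistinguishability actually claimed, and it simply fails at this cardinality. Your fallback to Lemma \ref{lemma1} rescues only the case $d=2$ (where $d+2^{N-1}-1=d^{N-1}+1$), leaving all $d\geq 3$ unproved. Your ``exhaustive coverage'' observation about the traces $S\cap\mathcal{A}$ is also not the relevant combinatorial fact: what matters for a given bipartition is only the single state $\ket{\Gamma^{S}}$ (or $\ket{\Gamma^{R}}$) whose defining subset equals one side of that bipartition.

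The argument the paper uses is a one-step reduction that avoids measurements entirely. Fix a bipartition $\mathcal{A}|\overline{\mathcal{A}}$; exactly one of $\mathcal{A},\overline{\mathcal{A}}$ appears as an index set $S$ (or $R$) in the construction, say $\mathcal{A}=S$. Then the $d+1$ states $\ket{\Gamma_1^{(\ast)}},\dots,\ket{\Gamma_d^{(\ast)}},\ket{\Gamma^{S}}$ are all supported on the $d\otimes d$ product subspace $\mathrm{span}\{\ket{j\cdots j}_{\mathcal{A}}\}_{j=0}^{d-1}\otimes\mathrm{span}\{\ket{j\cdots j}_{\overline{\mathcal{A}}}\}_{j=0}^{d-1}$, where they form $d+1$ mutually orthogonal maximally entangled states; by Lemma \ref{lemma1} applied with $N=2$ (any $d+1$ orthogonal maximally entangled states in $\mathbb{C}^d\otimes\mathbb{C}^d$ are locally indistinguishable), this subset, and hence the whole set containing it, is LOCC-indistinguishable across $\mathcal{A}|\overline{\mathcal{A}}$. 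Since this holds for every bipartition, the set is genuinely nonlocal. Your Schmidt-rank observations are correct but should feed into this containment argument rather than into a TOPLM analysis.
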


\begin{proof}
It's not hard to verify the orthogonality of these states. In the bipartition $S|\overline{S}$ where $S$ is a nonempty subset of $\{\mathrm{A}_1, \cdots, \mathrm{A}_N\}$ such that $|S| < N/2$, the states $$|\Gamma^{S}\rangle \ \text{ and } \ |\Gamma_k^{(\ast)}\rangle \ (k = 1, \cdots, d)$$ constitute $d+1$ maximally entangled states in the $d \otimes d$ subspace $\mathcal{H}^{S}_d \otimes \mathcal{H}^{\overline{S}}_d \subset (\mathbb{C}^d)^{\otimes |S|} \otimes (\mathbb{C}^d)^{\otimes |\overline{S}|} $, where $\mathcal{H}^{S}_d =$ span$\{|j\cdots j\rangle_{S}\}_{j=0}^{d-1}$ and $\mathcal{H}^{\overline{S}}_d =$ span$\{|j\cdots j\rangle_{\overline{S}}\}_{j=0}^{d-1}$. Since any $d+1$ maximally entangled states in $\mathbb{C}^d \otimes \mathbb{C}^d$ are locally indistinguishable, the states $|\Gamma^{S}\rangle,$ $|\Gamma_k^{(\ast)}\rangle \ (1 \leq k \leq d)$ and hence the whole set of states are indistinguishable through $S|\overline{S}$. When $N$ is an even number, the same holds for the bipartitions $R|\overline{R}$, where $\mathrm{A}_1 \in R \subset \{\mathrm{A}_1, \cdots, \mathrm{A}_N\}$ and $|R| = N/2$. As a result, the whole set of $$d + C_{N}^{1} + \cdots + C_{N}^{\lfloor N/2 \rfloor} = d+2^{N-1}-1$$ states above are genuinely nonlocal.
\end{proof}

\smallskip

Nevertheless, such a result is by no means satisfactory because when $d = 2$, it tells nothing more than Lemma \ref{lemma1}. The question still remains: does there exist any genuinely nonlocal subset of the $N$-qubit GHZ basis with cardinality smaller than $2^{N-1} + 1$?  For $N=2$ and $N=3$, the answers turn out to be negative as mentioned in Section \ref{section2}. With no doubt, it's both reasonable and interesting to consider the same question for $N > 3$: do the negative answers still hold?

As it is shown in this section, the answer to the above question is no: there do exist genuinely nonlocal subsets of the $N$-qubit GHZ basis with size smaller than $2^{N-1} + 1$, whenever $N > 3$. In the next subsection, we will first provide some examples for the few-qubit cases.  The essence behind these examples will be explained in the third subsection, using the language of group theory. As it will be shown, when $N$ grows large, only $\Theta(\sqrt{2^N})$ states among the $N$-qubit GHZ basis (namely, a proportion $\Theta(1/\sqrt{2^N})$ of the whole set) suffice to exhibit genuine nonlocality.  Also, for $N$-qudit cases where $N$ is large and local dimension $d$ is even, a same argument holds: $\Theta(\sqrt{2^N})$ states (namely, a proportion {\small $\Theta[1/(\frac{d}{\sqrt{2}})^{N}]$}) of the canonical GHZ basis (\ref{canonical}) suffice to exhibit genuine nonlocality in  $(\mathbb{C}^d)^{\otimes N}$.


\begin{table*}[hbt]
\caption{ \ Genuine nonlocality for subset $\mathcal{N}_{15}^{(6)}$ of the six-qubit GHZ basis}\label{tb3}
\begin{tabular}{|c|c|c|c|c|c|c|c|c|}
\hline
   & $\varphi_{000000, 111111}$ & $\varphi_{000100}$ & $\varphi_{001000}$ & $\varphi_{010000}$ & $\varphi_{100000}$ & $\varphi_{001100}$ & $\varphi_{010100}$ & $\varphi_{100100}$ \\
\hline
$\varphi_{000000, 111111}$ &  {\scriptsize \bf ABCDEF$|\emptyset$} &  ABCEF$|$D  &  ABDEF$|$C  &  ACDEF$|$B  &  BCDEF$|$A  &  ABEF$|$CD &  ACEF$|$BD  &  BCEF$|$AD     \\
\hline
$\varphi_{000001, 111110}$ &   ABCDE$|$F        &  ABCE$|$DF  &  ABDE$|$CF  &  ACDE$|$BF  &  BCDE$|$AF  &  ABE$|$CDF &  ACE$|$BDF  &  BCE$|$ADF     \\
\hline
$\varphi_{000010, 111101}$ &   ABCDF$|$E        &  ABCF$|$DE  &  ABDF$|$CE  &  ACDF$|$BE  &  BCDF$|$AE  &  ABF$|$CDE &  ACF$|$BDE  &  BCF$|$ADE     \\
\hline
$\varphi_{000011, 111100}$ &   ABCD$|$EF        &  ABC$|$DEF  &  ABD$|$CEF  &  ACD$|$BEF  &  BCD$|$AEF  &  AB$|$CDEF &  AC$|$BDEF  &  BC$|$ADEF     \\
\hline
\end{tabular}
\end{table*}

\bigskip

\bigskip

\centerline{\bf 2. Examples for $N \leq 6$}
\bigskip

In this subsection, we are showing some examples of genuinely nonlocal subsets for the $N$-qubit GHZ basis that have cardinality smaller than $2^{N-1}+1$, for cases $N = 4, 5, 6$. Here and after, without additional specification, we always use ``$\mathrm{A}$'' to signify the first subsystem, ``$\mathrm{B}$'' to signify the second subsystem, ``$\mathrm{C}$'' to signify the third subsystem and so on. {\color{black} We also call a subset containing $k$ states a ``$k$-ary'' subset for abbreviation.}

\begin{Exm}\label{exm1}
\textnormal{
{(i)} In four-qubit system $(\mathbb{C}^2)^{\otimes 4}$, the four-qubit GHZ basis is defined as
\small
\begin{equation*}\begin{split}
&|\psi_{0,15}\rangle=\frac{|0000\rangle \pm |1111\rangle}{\sqrt{2}}, \ |\psi_{1,14}\rangle=\frac{|0001\rangle \pm |1110\rangle}{\sqrt{2}},\\
&|\psi_{2,13}\rangle=\frac{|0010\rangle \pm |1101\rangle}{\sqrt{2}}, \ |\psi_{3,12}\rangle=\frac{|0011\rangle \pm |1100\rangle}{\sqrt{2}},\\
&|\psi_{4,11}\rangle=\frac{|0100\rangle \pm |1011\rangle}{\sqrt{2}}, \ |\psi_{5,10}\rangle=\frac{|0101\rangle \pm |1010\rangle}{\sqrt{2}},\\
&|\psi_{6,9} \rangle=\frac{|0110\rangle \pm |1001\rangle}{\sqrt{2}}, \ \ \ |\psi_{7,8}\rangle=\frac{|0111\rangle \pm |1000\rangle}{\sqrt{2}},
\end{split}\end{equation*}\normalsize
where we assign the conjugate pairs with subscript indices summing up to $2^4-1$. For simplicity of comprehension, here and after, we also write the subscript indexes of the multi-qubit GHZ bases in their binary form. In this notation, we have
\begin{equation}\label{GHZ16}
|\psi_{0,15}\rangle = |\psi_{0000,1111}\rangle, \ \ \cdots \ , \ |\psi_{7,8}\rangle = |\psi_{0111,1000}\rangle.
\end{equation}
In what follows, for simplicity of representation, we will sometimes omit the ``$| \ \cdot \ \rangle$'' notation for some of the quantum states, which won't cause any ambiguity. Now we show that the  $7$-ary subset
$$\mathcal{N}_7^{(4)} = \{\psi_{0000,1111}, \ \psi_{0001,1110}, \ \psi_{0010}, \ \psi_{0100}, \ \psi_{1000}\}$$
of the GHZ basis (\ref{GHZ16}) is genuinely nonlocal:
\smallskip
\\ \indent(1) For the two conjugate pairs $\{\psi_{0000,1111},$ $\psi_{0001,1110}\}$, they are locally equivalent to the Bell basis in the $2 \otimes 2$ subspace  $\text{span}\{|000\rangle, |111\rangle\}_{\mathrm{ABC}} \otimes \text{span}\{|0\rangle, |1\rangle\}_{\mathrm{D}}$ $\subset (\mathbb{C}^2)^{\otimes 3} \otimes \mathbb{C}^2 $ through bipartition ABC$|$D. \ Hence, $\mathcal{N}_7^{(4)}$ is locally indistinguishable in this bipartition, which is shown at the $[\psi_{0001, 1110}, \ \psi_{0000,1111}]$ entry of Table \ref{tb1};
\smallskip
\\ \indent(2) For the triples $\{\psi_{0000,1111}, \ \psi_{0010}\}$ and $\{\psi_{0001,1110},$ $\psi_{0010}\}$, they are locally equivalent to three Bell states in subspace $\text{span}\{|000\rangle, |111\rangle\}_{\mathrm{ABD}}  \otimes \text{span}\{|0\rangle, |1\rangle\}_{\mathrm{C}} \subset (\mathbb{C}^2)^{\otimes 3} \otimes \mathbb{C}^2$ through bipartition ABD$|$C,  and subspace $\text{span}\{|00\rangle, |11\rangle\}_{\mathrm{AB}} \otimes \text{span}\{|01\rangle, |10\rangle\}_{\mathrm{CD}} \subset (\mathbb{C}^2)^{\otimes 2} \otimes (\mathbb{C}^2)^{\otimes 2} $ through AB$|$CD respectively. \ Therefore, $\mathcal{N}_7^{(4)}$ is locally indistinguishable through bipartitions ABD$|$C and AB$|$CD, \ which is shown by the third column of Table \ref{tb1};
\smallskip
\\ \indent(3) Similarly, for subsets $\{\psi_{0000,1111}, \psi_{0001,1110}, \psi_{0100}\}$ and $\{\psi_{0000,1111}, \psi_{0001,1110}, \psi_{1000}\}$, they are locally indistinguishable through bipartitions ACD$|$B, AC$|$BD and bipartitions BCD$|$A, BC$|$AD respectively, which is shown by the fourth and the fifth column in Table \ref{tb1}.
Notice that the $2^3-1 = 7$ different bipartitions are just right filled into the $2 \times 4$ enties except the ``null'' one. Therefore, $\mathcal{N}_7^{(4)} $ is genuinely nonlocal.
}
\end{Exm}

\medskip

\noindent {(ii)} In five-qubit system $(\mathbb{C}^2)^{\otimes 5}$, we also have the five-qubit GHZ basis:
$$|\phi_{0,31}\rangle, \ |\phi_{1,30}\rangle, \ \cdots \ , \ |\phi_{15,16}\rangle$$
which is now reindexed as
\begin{equation}\begin{split}\label{GHZ32}
& |\phi_{00000,11111}\rangle = \frac{|00000\rangle \pm |11111\rangle}{\sqrt{2}}, \\
& |\phi_{00001,11110}\rangle = \frac{|00001\rangle \pm |11110\rangle}{\sqrt{2}}, \\
& \phantom{|\phi_{00001,11110}\rangle \ }  \vdots    \\
& |\phi_{01111,10000}\rangle = \frac{|01111\rangle \pm |10000\rangle}{\sqrt{2}}.  \\
\end{split}\end{equation}
Similarly as $\mathcal{N}_7^{(4)}$, the $11$-ary subset
\begin{equation*}\begin{split}
\mathcal{N}_{11}^{(5)} = \{&\phi_{00000,11111}, \ \phi_{00001,11110}, \ \phi_{00010}, \ \phi_{00100},  \\
& \phi_{01000}, \ \phi_{10000}, \ \phi_{00110}, \ \phi_{01010}, \ \phi_{10010} \ \}
\end{split}\end{equation*}
of the GHZ basis (\ref{GHZ32}) can also be checked to be genuinely nonlocal, with aid of Table \ref{tb2}. \ For example, for the subset $\{\phi_{00000,11111},$ $\phi_{00001, 11110}, \ \phi_{00010}\}$, it is locally indistinguishable in the ABCE$|$D, ABC$|$DE bipartitions, as shown by the third column of Table \ref{tb2}. \ One can go through Table \ref{tb2} routinly and find that the $2^{4} - 1 = 15$ bipartitions are right filled into the $2 \times 8$ entries except the ``null'' one. Thus, $\mathcal{N}_{11}^{(5)}$ is genuinely nonlocal.

\bigskip

It's easy to generalize Example \ref{exm1} to the more-partite cases. For example, in $(\mathbb{C}^2)^{\otimes 6}$, one can construct
\begin{equation*}\begin{split}
\mathcal{S}_{19}^{(6)} = \{&\varphi_{000000,111111}, \ \varphi_{000001,111110}, \ \varphi_{000010}, \ \varphi_{000100},\\
& \varphi_{001000}, \ \ \varphi_{010000}, \ \ \varphi_{100000}, \ \ \varphi_{000110}, \ \ \varphi_{001010},\\
& \varphi_{010010}, \ \ \varphi_{100010}, \ \ \varphi_{001100}, \ \ \varphi_{010100}, \ \ \varphi_{100100}, \\
& \varphi_{011000}, \ \ \varphi_{101000}, \ \ \varphi_{110000} \}
\end{split}\end{equation*}
which also contains 2 conjugate pairs. However, the next example shows that 19 is by no means the smallest cardinality.

\bigskip

\begin{Exm}\label{exm2}
\textnormal{
In six-qubit system $(\mathbb{C}^2)^{\otimes 6}$, the $15$-ary subset
\begin{equation*}\begin{split}
\mathcal{N}_{15}^{(6)} = \{&\varphi_{000000,111111}, \ \varphi_{000001,111110}, \ \varphi_{000010,111101}, \\
& \ \varphi_{000011,111100}, \ \ \varphi_{000100}, \ \ \varphi_{001000}, \ \ \varphi_{010000},\\
& \ \varphi_{100000}, \ \ \varphi_{001100}, \ \ \varphi_{010100}, \ \ \varphi_{100100} \ \}
\end{split}\end{equation*}
of the six-qubit GHZ basis is genuinely nonlocal. Again, one can check this fact with Table \ref{tb3}. Unlike $\mathcal{S}_{19}^{(6)}$, $\mathcal{N}_{15}^{(6)}$ contains 4 conjugate pairs and these conjugate pairs are indistinguishable through bipartitions ABCDE$|$F, ABCDF$|$E and ABCD$|$EF, as shown in the second column of Table \ref{tb3}. \ Here in this example, for convenience of description, we place ``{\small ABCDEF$|\emptyset$}'' into the [$\varphi_{000000, 111111}, \ \varphi_{000000, 111111}$] entry, meaning nothing other than ``the states are indistinguishable in none of the bipartitions''. \ Notice that from this column, one can generate the columns behind by ``moving'' some of the parties from the left partition to the right partition. For example, for the ``$\varphi_{000100}$'' (third) column, just move ``D'' from the left side to the right side and for the ``$\varphi_{001100}$'' (seventh) column, move parties ``CD'' and so on. We will characterize in the proof of Theorem \ref{thm1} that this corresponds to a certain ``cosets structure'' about the group of all bipartitions. \ In fact, the same pattern  also appears in Table \ref{tb1} and Table \ref{tb2}, once we
place ``{\small ABCD$|\emptyset$}'' and ``{\small ABCDE$|\emptyset$}'' to the ``{null}'' entries  \cite{zero}.
}
\end{Exm}

\bigskip

\begin{Exm}\label{exm3}
\textnormal{
Going back to the five-qubit case, one can check that the subset
\begin{equation*}\begin{split}
\mathcal{S}_{11}^{(5)} = \{&\phi_{00000,11111}, \ \phi_{00001,11110}, \ \phi_{00010,11101}, \\
&\phi_{00011,11100}, \ \ \phi_{00100}, \ \ \phi_{01000}, \ \ \phi_{10000}\}
\end{split}\end{equation*}
of the five-qubit GHZ basis, which contains 4 conjugate pairs, is also genuinely nonlocal. But it has the same cardinality as $\mathcal{N}_{11}^{(5)}$. \ For the six-qubit case, we can further construct genuinely nonlocal subset
\begin{equation*}\begin{split}
\mathcal{N}_{19}^{(6)} = \{&\varphi_{000000,111111}, \ \varphi_{000001,111110}, \ \varphi_{000010,111101}, \\
& \varphi_{000100,111011}, \ \varphi_{000011,111100}, \ \varphi_{000110,111001}, \\
& \varphi_{000101,111010}, \ \ \varphi_{000111,111000}, \ \ \varphi_{001000},\\
& \varphi_{010000}, \ \ \varphi_{100000} \}
\end{split}\end{equation*}
that contains 8 conjugate pairs. We see however that it doesn't have smaller cardinality than $\mathcal{N}_{15}^{(6)}$.
}
\end{Exm}

\smallskip

The essence behind all the above examples will be revealed in the next subsection, using the language of group theory \cite{Rotman}.

\bigskip

\bigskip

\centerline{\bf 3. The unified characterization}

\bigskip

For each example in the last subsection, the $2^{N-1}$ bipartitions (including the ``null'' one \cite{zero}) for the $N$ subsystems actually form an abelian group $G'$. Each group $G'$ can be devided into bunches of cosets which correspond to columns of Table \ref{tb1} -- Table \ref{tb3}, provided a certain subgroup $H$ of $G'$ is specified. The subgroup $H$ defines the conjugate pairs in each genuinely nonlocal set. To be more specific, we present the following theorem for which the proof is constructive.

\smallskip

{\color{black}
\begin{Thm}\label{thm1}
\color{black}
\ In $N$-qubit system $(\mathbb{C}^2)^{\otimes N}$, genuinely nonlocal subsets $\mathcal{N}_t$ of the $N$-qubit GHZ basis
$$\{\xi_{0, 2^N-1}, \ \xi_{1, 2^N-2}, \ \cdots, \ \xi_{2^{N-1}-1, \ 2^{N-1}}\}$$
that have cardinality $|\mathcal{N}_t| = 2^t + 2^{N-t} - 1$ can be constructed, where $1 \leq t \leq N-1$.
\end{Thm}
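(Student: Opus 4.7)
The plan is to identify each bipartition $S \mid \bar S$ of the $N$ subsystems with the element $g_S := [\mathbf 1_S] \in Q := \mathbb Z_2^N / \langle \mathbf 1 \rangle$ (well-defined since $\mathbf 1_S + \mathbf 1_{\bar S} = \mathbf 1$), and to observe that the same group $Q$ indexes the conjugate pairs of the $N$-qubit GHZ basis via $[a] \leftrightarrow \{|\xi_{a,\bar a}^{\pm}\rangle\}$. This coincidence is precisely the ``cosets structure'' already visible in Tables~\ref{tb1}--\ref{tb3}. The first step is to extract from it a local-discrimination criterion: across bipartition $g_S$, the GHZ state $|\xi_{a,\bar a}^{\pm}\rangle$ rewrites as $\frac{1}{\sqrt{2}}(|a_S\rangle|a_{\bar S}\rangle \pm |\bar a_S\rangle|\bar a_{\bar S}\rangle)$, and two pair-indices $[a],[b]$ share the same $S$-Schmidt basis if and only if $[a]+[b] \in \{[0], g_S\}$; in that case their (up to four) Bell-like descendants lie in a common $2 \otimes 2$ subspace of $\mathcal H_S \otimes \mathcal H_{\bar S}$. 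Since any three Bell states in $\mathbb C^2 \otimes \mathbb C^2$ are LOCC-indistinguishable, a set $\mathcal S$ of GHZ states is LOCC-indistinguishable across $g_S$ whenever some ``block'' $\{[a], [a]+g_S\} \subseteq Q$ contributes at least three states to $\mathcal S$.

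Given $t \in \{1,\ldots,N-1\}$, I would construct $\mathcal N_t$ as follows. Fix any subgroup $K \leq Q$ of order $2^{t-1}$ (for instance, the image in $Q$ of $\{a \in \mathbb Z_2^N : a_1 = \cdots = a_{N-t+1} = 0\}$, which does not contain $\mathbf 1$ and hence injects into $Q$) and fix any complete set of representatives $v_1,\ldots,v_{2^{N-t}-1}$ for the nontrivial cosets of $K$. Define $\mathcal N_t$ to consist of both conjugate partners at every $[a] \in K$ (supplying $2 \cdot 2^{t-1} = 2^t$ states) together with the ``$+$''-state at each $v_j$ (supplying $2^{N-t}-1$ states), so that $|\mathcal N_t| = 2^t + 2^{N-t} - 1$ as required.

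To verify genuine nonlocality I would apply the block criterion to each nonzero $g \in Q$. If $g \in K$, take $[a] = [0]$: both $[a]$ and $[a]+g$ then lie in $K$, so the block supplies two conjugate pairs, i.e., four Bell states. If $g \notin K$, then $g$ sits in a unique coset $K+v_j$; setting $[a] = g + v_j \in K$ gives $[a]+g = v_j$, and the block supplies the conjugate pair at $[a]$ (two states) together with the single state at $v_j$, totalling three. In either case the criterion is met, so $\mathcal N_t$ is LOCC-indistinguishable across every proper bipartition and is therefore genuinely nonlocal. The single nontrivial step is the $2\otimes 2$ block decomposition; once that is in hand, the rest is just a short coset computation.
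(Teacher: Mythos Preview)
Your proposal is correct and follows essentially the same approach as the paper: your quotient $Q=\mathbb Z_2^N/\langle\mathbf 1\rangle$ is the paper's $G'=G/I$, your subgroup $K$ is their $H$, your coset representatives $v_j$ are their $g_l$, and your ``block criterion'' (three Bell states in a common $2\otimes 2$ subspace) is exactly the mechanism the paper invokes in both the $p_{S|\bar S}\in H$ and $p_{S|\bar S}\in g_l+H$ cases. The only cosmetic difference is that you state the Bell-block criterion once up front, whereas the paper verifies the two cases separately.
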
}

\begin{proof}
\ Denote $G$ the finite abelian group of all $N$-tuples, whose elements are $\mathbf{a} = (a^{(1)}, a^{(2)}, \cdots, a^{(N)}) \in \{0, 1\}^N$ and whose  additive operation ``$+$'' is defined as: \ $\mathbf{a} + \mathbf{b} = (a^{(1)} \oplus b^{(1)}, a^{(2)} \oplus b^{(2)}, \cdots, a^{(N)} \oplus b^{(N)})$, where ``$\oplus$'' is the ``mod 2'' addition. Obviously, $I = \{(00\cdots0), \ (11\cdots 1)\}$ is a subgroup of $G$ with order 2. Denote $G' = G/I$ its quotient group and the elements of $G'$ are denoted as $a = [\mathbf{a}] = [a^{(1)}, a^{(2)}, \cdots, a^{(N)}]$, signifying the coset $\mathbf{a} + I$ with representative element $\mathbf{a} \in G$. The additive operation of $G'$ is then induced as: \ $[\mathbf{a}] + [\mathbf{b}] = [a^{(1)} \oplus b^{(1)}, a^{(2)} \oplus b^{(2)}, \cdots, a^{(N)} \oplus b^{(N)}]$, which is well-defined. \ Notice that all elements of $G'$ have order 2 and therefore, $G'$ can also be regarded as an ``$\mathbb{F}_2$ -- linear space'', where $\mathbb{F}_2 = \{0, 1\}$ is regarded as a field. Since $|G'| = 2^{N-1}$, any proper subgroup $H < G'$ must have order $|H| = 2^{t-1} \ (1 \leq t \leq N-1)$. Moreover, $H = \langle h_1 \rangle$ {\footnotesize $\bigoplus$} $\cdots$ {\footnotesize $\bigoplus$} $\langle h_{t-1} \rangle$, where $\{h_1, \cdots, h_{t-1}\} \subset H$ is a set of basis that generates $H$ (regarded as a linear subspace of $G'$) and ``{\footnotesize $\bigoplus$}'' is the direct sum \cite{subgroup}. That is to say, any element $h \in H$ can be uniquely written as $h = m_1 h_1 + \cdots + m_{t-1} h_{t-1}$, where $m_i \in \mathbb{F}_2 \ (1 \leq i \leq t-1)$. \ Now, denote all the distinct cosets of $H$ as $g_1 + H, \cdots, g_{c} + H$, \ where $g_l + H \triangleq \{g_l + h: h \in H\}$ for $1 \leq l \leq c = 2^{N-t} - 1$. Since $G' = H \cup (g_1 + H) \cup \cdots \cup (g_c + H)$, \ any element $g \in G'$ must be located in $H$ or in certain $g_l + H$. Namely, $g = h$ or $g = g_l + h$ for some $h \in H$ and certain $l \in \{1, \cdots c\}$.

Given the above proper subgroup $H < G'$, we now construct the genuinely nonlocal subsets $\mathcal{N}_t$ of the $N$-qubit GHZ basis. For any element $h \in H$, denote $\mathbf{h}, \ \overline{\mathbf{h}} \in G$ the pair of $N$-tuples such that [$\mathbf{h}] = [\overline{\mathbf{h}}]  = h$. Obviously, $\overline{\mathbf{h}} = \mathbf{h} + (11\cdots 1)$. Also, for the aforemention $g_l \in G' \ (1 \leq l \leq c)$, \ denote $\mathbf{g}_l = (g_l^{(1)}, g_l^{(2)}, \cdots, g_l^{(N)})\in G$ one of the $N$-tuples such that $[\mathbf{g}_l] = g_l$. Let
$$\mathcal{N}_t = \{\xi_{\mathbf{h}, \overline{\mathbf{h}}} \ | \ h \in H\} \cup \{\xi_{\mathbf{g}_1}, \cdots, \xi_{\mathbf{g}_c}\},$$
then $|\mathcal{N}_t| = 2^t + 2^{N-t} - 1$ and we next prove that it is genuinely nonlocal. For any bipartition $S|\overline{S}$ $(\emptyset \subsetneq S \subsetneq \{1, 2, \cdots, N\})$, denote $\mathbf{p}_S = (p^{(1)}, p^{(2)}, \cdots, p^{(N)})$, where
\begin{equation*}
p^{(n)} = \left\{
\begin{array}{ll}
1, & \ n \in S  \\
0, & \ n \notin S \\
\end{array} \right.
\end{equation*}
and $\overline{\mathbf{p}}_S = \mathbf{p}_S + (11\cdots 1) = \mathbf{p}_{\overline{S}}$. By discussions in the last paragraph, for $[\mathbf{p}_S ]= [\overline{\mathbf{p}}_S ] \triangleq p_{S|\overline{S}} \in G'$, there must exist some $h_{S|\overline{S}} \in H$ such that $$p_{S|\overline{S}} = h_{S|\overline{S}} \ ,$$  or $$p_{S|\overline{S}} = g_l + h_{S|\overline{S}}$$  for certain $l \in \{ 1, \cdots, c\}$. \ For the former case, obviously, the two conjugate pairs $\xi_{\mathbf{00\cdots 0}, \ \mathbf{11\cdots 1}}$ and $\xi_{\mathbf{h}_S, \ \overline{\mathbf{h}}_S}$ are locally equivalent to the Bell basis in the $S|\overline{S}$ bipartition. \ For the latter case, across the bipartition $S|\overline{S}$, \ the triple $\{\xi_{\mathbf{g}_l}, \ \xi_{\mathbf{h}_S, \overline{\mathbf{h}}_S}\}$ is locally equivalent to three Bell states in the \ $2 \otimes 2$ \ subspace $\mathcal{H}^{S}_2 \otimes \mathcal{H}^{\overline{S}}_2 \ \subset \ (\mathbb{C}^2)^{\otimes |S|} \otimes (\mathbb{C}^2)^{\otimes |\overline{S}|} $, \ where
$$\mathcal{H}^{S}_2 =  \text{span}\left\{\left.\bigotimes_{n \in S}\left|g_l^{(n)} \oplus j\right> \right| j = 0, 1 \right\}$$
and
$$\mathcal{H}^{\overline{S}}_2 =  \text{span}\left\{\left.\bigotimes_{n \in \overline{S}}\left|g_l^{(n)} \oplus j\right> \right|  j = 0, 1 \right\}.$$
As a result, $\mathcal{N}_t$ is indistinguishable in all bipartitions and so it is genuinely nonlocal.
\end{proof}

\medskip
Immediately, we achieve genuinely nonlocal subset of the $N$-qubit GHZ basis with cardinality $\Theta(\sqrt{2^N})$, on condition that $N$ is large:
{\color{black}
\begin{Cor}\label{thm2}
\color{black}
\ In $N$-qubit system $(\mathbb{C}^2)^{\otimes N}$, \ $m_2(N)$ states among the $N$-qubit GHZ basis suffice to exhibit genuine nonlocality, where
\begin{equation*}
m_2(N) = \left\{
\begin{array}{ll}
2^{M+1} - 1, &  \ \ \ N  = 2 M \\
\\
2^{M+\log_2 3} - 1, & \ \ \ N = 2M + 1. \\
\end{array} \right.
\end{equation*}
\end{Cor}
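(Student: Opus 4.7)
The plan is to derive the corollary as an immediate consequence of Theorem~\ref{thm1} by optimizing over the free parameter $t$. By Theorem~\ref{thm1}, for every integer $t$ with $1 \le t \le N-1$ there is a genuinely nonlocal subset of the $N$-qubit GHZ basis of cardinality $2^t + 2^{N-t} - 1$. Hence the smallest cardinality attainable through that construction equals $\min_{1 \le t \le N-1}\bigl(2^t + 2^{N-t}\bigr) - 1$, so the whole task reduces to a one-variable discrete optimization.

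To carry out the minimization, I would observe that $f(t) = 2^t + 2^{N-t}$ is strictly convex in the real variable $t$ and symmetric about $t = N/2$; by AM--GM one has $f(t) \ge 2\sqrt{2^{t} \cdot 2^{N-t}} = 2 \cdot 2^{N/2}$, with equality iff $2^t = 2^{N-t}$, i.e.\ $t = N/2$. Over the integer range $\{1, \dots, N-1\}$, which contains $\lfloor N/2 \rfloor$ whenever $N \ge 2$, the minimum is therefore attained at the integer closest to $N/2$. A direct comparison of consecutive values confirms that no competing integer $t$ does better.

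Finally, I would separate the two parities. For even $N = 2M$, the choice $t = M$ gives $f(M) = 2 \cdot 2^M = 2^{M+1}$, yielding $m_2(N) = 2^{M+1} - 1$. For odd $N = 2M+1$, both $t = M$ and $t = M+1$ give $f(M) = 2^M + 2^{M+1} = 3 \cdot 2^M = 2^{M + \log_2 3}$, yielding $m_2(N) = 2^{M + \log_2 3} - 1$. Since the argument is an elementary discrete optimization layered on top of the constructive bound of Theorem~\ref{thm1}, no genuine obstacle arises; one only needs to confirm that the feasible range $\{1,\dots,N-1\}$ includes the optimal index (valid for all $N \ge 2$) and to note that the closed forms indeed realize the advertised asymptotics $\Theta(\sqrt{2^N})$, since $2^{M+1} = 2\sqrt{2^N}$ and $3 \cdot 2^M = (3/\sqrt{2})\sqrt{2^N}$.
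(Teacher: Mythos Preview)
Your proposal is correct and takes essentially the same approach as the paper: minimize $|\mathcal{N}_t| = 2^t + 2^{N-t} - 1$ over $t$, with the minimum at $t = M = \lfloor N/2 \rfloor$. The paper's proof is a one-line assertion of this fact, whereas you justify the minimization via convexity/AM--GM and the parity split---more detail than the paper gives, but the same argument.
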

}

\smallskip

\begin{proof}
\ In either case, $|\mathcal{N}_t| = 2^t + 2^{N-t} - 1$ achieve its minimum when $t = M$.
\end{proof}

\medskip

Along the same vein, one can generalize the above result to the $N$-qudit cases, wherever the local dimension $d$ are even integers. For cases where $d$ is odd, we leave it open.

\begin{Thm}\label{prop3}
\ In $N$-qudit system $(\mathbb{C}^d)^{\otimes N}$ where $d$ is even, $m_d(N)$ states among the canonical generalized GHZ basis (\ref{canonical}) suffice to exhibit genuine nonlocality, where
\begin{equation*}\begin{split}
& \ m_d(N) =  \\
& \ \ \  \ \ \ \left\{\begin{array}{ll}
2^M (1 + \frac{d}{2^{\lfloor \log_2d \rfloor}}) - 1, & \ \ N + \lfloor \log_2 \frac d2 \rfloor = 2 M \\
\\
2^M (2 + \frac{d}{2^{\lfloor \log_2d \rfloor}}) - 1, & \ \ N + \lfloor \log_2 \frac d2 \rfloor = 2M + 1. \\
\end{array} \right.
\end{split}\end{equation*}
\end{Thm}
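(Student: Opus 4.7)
My plan is to generalize the group-theoretic construction of Theorem~\ref{thm1} to the qudit setting, exploiting the fact that when $d$ is even, the cyclic group $\mathbb{Z}_d$ contains the canonical order-$2$ element $d/2$. This $\mathbb{Z}_2$-substructure makes it possible to embed a qubit-GHZ-like ``conjugation'' mechanism inside the canonical generalized GHZ basis (\ref{canonical}) and then pull back the qubit construction to obtain a genuinely nonlocal subset of the qudit basis.

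First, I would set up the group framework. Encode each canonical generalized GHZ state by the extended index $\mathbf{a}=(k-1,i_2,\ldots,i_N)\in G\defeq \mathbb{Z}_d^N$ with componentwise addition modulo $d$. The diagonal subgroup $I=\langle(1,1,\ldots,1)\rangle$ of order $d$ has cosets parameterizing the $d^{N-1}$ ``multiplets'' of $d$ GHZ states each; the element $(d/2,\ldots,d/2)\in I$ generates an order-$2$ subgroup $I_0\subset I$, and the induced involution $\tau$ on $G$ plays the role of the qubit conjugation $\mathbf{h}\mapsto\overline{\mathbf{h}}$ from Theorem~\ref{thm1}'s proof. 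Each $\tau$-pair $\{\mathbf{a},\tau(\mathbf{a})\}$ of canonical GHZ states is the qudit analogue of a conjugate pair.

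Second, I would iterate along the binary substructure of $\mathbb{Z}_d$. Since $d$ is even, one can successively pass to subgroups of index $2$ inside $\mathbb{Z}_d$, namely $\langle 2\rangle,\langle 4\rangle,\ldots$, iterating the $\tau$-mechanism $\lfloor\log_2(d/2)\rfloor$ times. The net effect is a ``virtual $N'$-qubit'' GHZ structure with $N' = N+\lfloor\log_2(d/2)\rfloor$ embedded inside the canonical qudit GHZ basis. Applying the coset-based construction of Theorem~\ref{thm1} with parameter $t=\lfloor N'/2\rfloor$ to this virtual qubit system (exactly as optimized in Corollary~\ref{thm2}) then produces a subset of the qudit basis whose cardinality matches $m_d(N)$, with the parity of $N+\lfloor\log_2(d/2)\rfloor$ selecting the appropriate branch of the formula.

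Third, I would verify genuine nonlocality across each bipartition $S|\overline{S}$ of the original $N$-partite system. Tracking which virtual-qubit bipartition each qudit bipartition corresponds to, one invokes the Bell-basis mechanism from Theorem~\ref{thm1}'s proof: either a full Bell basis in a $2\otimes 2$ virtual subspace (when the bipartition class lies in the chosen subgroup $H$) or three Bell states in such a subspace (for a nontrivial coset). Lifting to the qudit setting, each of these configurations supplies at least $d+1$ canonical GHZ states whose $S|\overline{S}$-restrictions are mutually orthogonal maximally entangled states lying in a common $d\otimes d$ subspace; local indistinguishability then follows from Lemma~\ref{lemma1} applied bipartitely. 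I expect the main obstacle to be the rigorous combinatorial bookkeeping of this virtual-qubit embedding --- ensuring that every qudit bipartition is faithfully captured by some virtual qubit bipartition, and that the required $d+1$ MES can always be extracted from the chosen states. This is precisely where the even-$d$ hypothesis is essential, and I would expect this verification to be the most technically delicate step.
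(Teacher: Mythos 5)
Your plan diverges from the paper's proof, and the divergence introduces real gaps. The paper does not build a virtual $(N+\lfloor\log_2(d/2)\rfloor)$-qubit system at all: it keeps exactly the group $G'$ of order $2^{N-1}$ from Theorem~\ref{thm1}, indexed by the $N$ physical parties only, picks a subgroup $H$ of order $2^{t-1}$, and for each $h\in H$ includes the \emph{entire} multiplet of $d$ phase states $\Xi_k^{(\mathbf{h})}$, $k=1,\dots,d$, obtained by shifting the $n$-th register by $h^{(n)}d/2$ (this is the only place evenness of $d$ enters), while each coset representative contributes a single state. Every bipartition then sees either $2d$ or $d+1$ maximally entangled states in a common $d\otimes d$ subspace, and the $N=2$ case of Lemma~\ref{lemma1} applies. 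The cardinality is $2^{t-1}d+2^{N-t}-1$, and the quantity $\lfloor\log_2(d/2)\rfloor$ in $m_d(N)$ is purely an artifact of minimizing this over $t$, via the rewriting $2^{t-1}d=\lambda\,2^{t+\lfloor\log_2(d/2)\rfloor}$ with $\lambda=d/2^{\lfloor\log_2 d\rfloor}$; it does not count structural ``extra qubits.''

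Three steps of your plan would fail as stated. First, the chain $\mathbb{Z}_d\supset\langle 2\rangle\supset\langle 4\rangle\supset\cdots$ of index-$2$ subgroups has length equal to the $2$-adic valuation of $d$, not $\lfloor\log_2(d/2)\rfloor$: for $d=10$ you can descend only one level ($\langle 4\rangle=\langle 2\rangle$ there) while your formula requires two, so the virtual $N'$-qubit structure does not exist for general even $d$, and your scheme cannot reproduce the correction factor $d/2^{\lfloor\log_2 d\rfloor}$ appearing in $m_d(N)$. Second, a bipartition $S|\overline{S}$ of the $N$ physical parties does not induce a bipartition of the virtual qubits: the phase degrees of freedom are not spatially localized, so ``tracking which virtual-qubit bipartition each qudit bipartition corresponds to'' is not well defined, and only $2^{N-1}-1$ of the $2^{N'-1}-1$ virtual bipartitions are physically meaningful --- the optimization $t=\lfloor N'/2\rfloor$ of Corollary~\ref{thm2} is calibrated to cover all of the latter, not the former. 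Third, unless the subgroup you feed into Theorem~\ref{thm1} is chosen to contain the whole ``phase'' subgroup (a choice you never stipulate), some included multiplets carry only a proper subset of the $d$ phases, and then a given physical bipartition need not see $d+1$ orthogonal maximally entangled states inside one $d\otimes d$ subspace, which is the only indistinguishability criterion available here. If you do force that choice and $d$ is a power of $2$, your construction collapses to the paper's; that collapse, and the verification it requires, are precisely what is missing from the proposal.
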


\medskip

The proof of Theorem \ref{prop3} is given in the Appendix. Note that when $N$ approaches infinity, $m_d(N) =$ {$\Theta(\sqrt{2^N})$}. That is, a proportion {\small $\Theta[1/(\frac{d}{\sqrt{2}})^{N}]$} of the canonical generalized GHZ basis (\ref{canonical}) suffice to exhibit genuine nonlocality, whenever $d$ is even and $N$ is large.

\smallskip

\medskip
\section{$d + 1$ genuinely nonlocal generalized GHZ states in $(\mathbb{C}^d)^{\otimes N}$}\label{section4}
\medskip

The results of Section \ref{section3} indicate that when a fixed local dimension $d$ and a large number of parties $N$ are considered, a small proportion of the canonical generalized GHZ basis suffices to exhibit genuine nonlocality in $(\mathbb{C}^d)^{\otimes N}$. \ However, if we instead consider a fixed number of parties $N$, and allow $d$ to grow arbitrarily, then a similar argument doesn't hold anymore. We are handling these situations here in this section. \ As we will show, given any number of parties $N$, when the local dimension $d$ is sufficiently large and when the generalized GHZ states considered are not confined to the canonical form, $d + 1$ genuinely nonlocal generalized GHZ states can always be constructed in system $(\mathbb{C}^d)^{\otimes N}$.

Looking back to the genuinely nonlocal sets constructed in the last section, for each bipartition, there is always (at least) $d + 1$ states being locally equivalent to $d+1$ maximally entangled states in a certain $d \otimes d$ product subspace $\mathcal{H}_{d \otimes d} \simeq \mathbb{C}^d \otimes \mathbb{C}^d$. \ Here, a more cunning idea is: to make $d+1$ states locally indistinguishable, the $(d + 1)$-th state needn't be supported on the same $d \otimes d$ subspace. To be more specific, consider the three orthogonal states
\begin{equation}\begin{split}\label{alpha}
& |\alpha_1\rangle = \frac{|00\rangle + |11\rangle}{\sqrt{2}}, \ \ |\alpha_2\rangle = \frac{|00\rangle - |11\rangle}{\sqrt{2}}, \\
& |\alpha_3\rangle = |01\rangle
\end{split}\end{equation}
in system $\mathbb{C}^2 \otimes \mathbb{C}^2$, which have been shown to be locally indistinguishable by Ghosh et al. \cite{Ghosh02}. Ghosh et al. proved this fact by calculating an upper bound on the distillable entanglement of an ingeniously constructed four-party state, \ which will then induce a contradiction provided local distinguishability assumption were made. Actually, there are more concise ways to proof local indistinguishability of these states: for example, it's routine to prove that they are locally irreducible, \ using the TOPLM technique as in \cite{Walgate02}. Intuitively but informally, the local indistinguishability of these states can be understood in such a way: To distinguish $|\alpha_1\rangle$ and $|\alpha_2\rangle$, we have to measure both subsystems along the $\{|+\rangle, |-\rangle\}$ basis, by which we get ``$++$'' or ``$--$'' if we have $|\alpha_0\rangle$, and get ``$+-$'' or ``$-+$'' if we have $|\alpha_1\rangle$; \ Simultaneously, we must also distinguish $|\alpha_3\rangle$ but unfortunately, $|\alpha_3\rangle = |01\rangle$ has nonzero overlap with each one of $|++\rangle, |+-\rangle, |-+\rangle, |--\rangle$. Therefore, the states are locally indistinguishable. In fact, the assumption that $|\alpha_3\rangle$ has nonzero overlap with just one among $|++\rangle, |+-\rangle, |-+\rangle, |--\rangle$ (equivalently, nonzero overlap with span$\{|++\rangle, |+-\rangle, |-+\rangle, |--\rangle\}$) suffices to induce indistinguishability of the set.

\begin{figure}[t]
\centering
\includegraphics[scale=0.33]{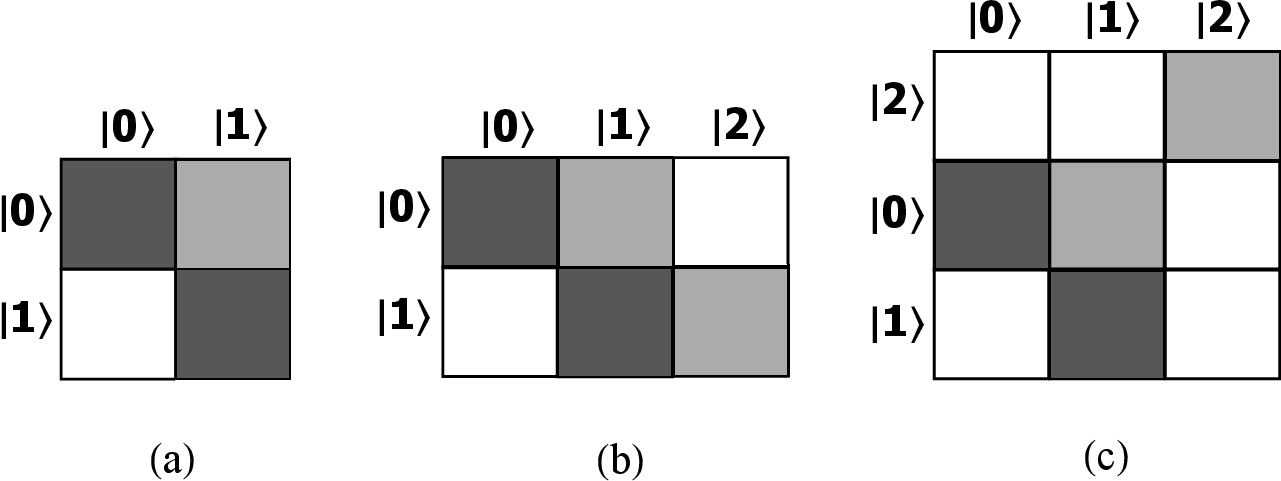}
\captionsetup{justification=raggedright}
\caption
{\small (a) States $\{|\alpha_1\rangle, |\alpha_2\rangle, |\alpha_3\rangle\}$ in $\mathbb{C}^2 \otimes \mathbb{C}^2$; \ (b) States $\{|\beta_1\rangle, |\beta_2\rangle, |\beta_3\rangle\}$ in $\mathbb{C}^2 \otimes \mathbb{C}^3$; \ \ (c) States $\{|\gamma_1\rangle, |\gamma_2\rangle, |\gamma_3\rangle\}$ in $\mathbb{C}^3 \otimes \mathbb{C}^3$. \ The dark tiles signify the conjugate pairs and the grey ones signify the third states.
}\label{fig1}
\end{figure}

Inspired by this naturally, one might further consider local distinguishability of the three states
\begin{equation}\begin{split}
& |\beta_1\rangle = \frac{|00\rangle + |11\rangle}{\sqrt{2}}, \ \ |\beta_2\rangle = \frac{|00\rangle - |11\rangle}{\sqrt{2}}, \\
& |\beta_3\rangle = \frac{|01\rangle + |12\rangle}{\sqrt{2}}
\end{split}\end{equation}
in $\mathbb{C}^2 \otimes \mathbb{C}^3$ or alternatively, states
\begin{equation}\begin{split}\label{eta}
& |\gamma_1\rangle = \frac{|00\rangle + |11\rangle}{\sqrt{2}}, \ \ |\gamma_2\rangle = \frac{|00\rangle - |11\rangle}{\sqrt{2}}, \\
& |\gamma_3\rangle = \frac{|01\rangle + |22\rangle}{\sqrt{2}}
\end{split}\end{equation}
in $\mathbb{C}^3 \otimes \mathbb{C}^3$. Similar as $\{|\alpha_1\rangle, |\alpha_2\rangle, |\alpha_3\rangle\}$, the last states in both sets have nonzero overlap with $\text{span}\{|++\rangle,$ $|+-\rangle, |-+\rangle, |--\rangle\} = $ span$\{|0\rangle, |1\rangle\} \otimes \text{span}\{|0\rangle, |1\rangle\}$ (see FIG \ref{fig1}). Beyond questions, it's reasonable to speculate that they are also locally indistinguishable. Unlike $\{|\alpha_1\rangle, |\alpha_2\rangle, |\alpha_3\rangle\}$ however, we failed to prove this rigorously with the methods from both \cite{Walgate02} and \cite{Ghosh02}. Fortunately, we instead find that all such sets are PPT-indistinguishable indeed. For convenience of explanation, we first begin with $\{|\beta_1\rangle, |\beta_2\rangle, |\beta_3\rangle\}$. There is a same argument for its $d$-dimensional analogue:

\begin{Lem}\label{lemma2}
\ In two-partite system $\mathbb{C}^d \otimes \mathbb{C}^{2d - 1}$ $(d \geq 2)$, the following $d + 1$ orthogonal states:
\begin{equation}\begin{split}
& \ |\chi^{(d)}_{k}\rangle = \frac{1}{\sqrt{d}} \sum_{j = 0}^{d-1} \omega_{d}^{j(k-1)} \ |jj\rangle, \ \ \ (1 \leq k \leq d, \ \ \omega_{d} = e^{\frac{2\pi i}{d}})\\
& and \\
& \ |\chi^{(d)}_{d+1}\rangle = \frac{1}{\sqrt{d}} \sum_{j = 0}^{d-1}  \ |j, \ j+d-1 \rangle
\end{split}\end{equation}
are PPT-indistinguishable.
\end{Lem}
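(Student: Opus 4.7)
My plan is to proceed by contradiction: suppose a PPT POVM $\{M_k\}_{k=1}^{d+1}$ perfectly distinguishes the states, i.e., $\tr(M_k |\chi_j^{(d)}\rangle\langle\chi_j^{(d)}|) = \delta_{jk}$. A standard positivity argument (using $M_k \geq 0$ together with $\sum_k M_k = I$) forces $M_k|\chi_j^{(d)}\rangle = \delta_{jk}|\chi_j^{(d)}\rangle$, and Hermiticity of $M_k$ then gives the block decomposition $M_k = |\chi_k^{(d)}\rangle\langle\chi_k^{(d)}| + N_k$, where $N_k \geq 0$ has support inside the orthogonal complement of $V := \spn\{|\chi_1^{(d)}\rangle, \ldots, |\chi_{d+1}^{(d)}\rangle\}$. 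Since $\{|\chi_k^{(d)}\rangle\}_{k=1}^{d}$ is a Fourier basis of $\spn\{|jj\rangle : 0 \leq j \leq d-1\}$, I will crucially use $N_k|jj\rangle = 0$ for all $j \in \{0, \ldots, d-1\}$ and all $k$, together with $\sum_{k=1}^{d+1} N_k = I - \Pi_V$.

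The main step is to extract quantitative constraints from the PPT conditions $M_k^{T_A} \geq 0$ (for $k \leq d$) by restricting to the two-dimensional subspace $U := \spn\{|0, d-1\rangle, |d-1, 0\rangle\}$. A direct calculation gives
\begin{equation*}
|\chi_k^{(d)}\rangle\langle\chi_k^{(d)}|^{T_A}\,|0, d-1\rangle = \tfrac{1}{d}\omega_d^{-(d-1)(k-1)}|d-1, 0\rangle,
\end{equation*}
and symmetrically on $|d-1, 0\rangle$, so on $U$ each $|\chi_k^{(d)}\rangle\langle\chi_k^{(d)}|^{T_A}$ acts as a phase-weighted swap with eigenvalues $\pm 1/d$. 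Taking the $(-1/d)$-eigenvector $|\psi_k\rangle = \frac{1}{\sqrt{2}}(|0, d-1\rangle - \omega_d^{-(d-1)(k-1)}|d-1, 0\rangle)$ and inserting it into $\langle\psi_k|M_k^{T_A}|\psi_k\rangle \geq 0$ yields $\langle\psi_k|N_k^{T_A}|\psi_k\rangle \geq 1/d$. The crucial cancellation is that the off-diagonal terms $\langle 0, d-1|N_k^{T_A}|d-1, 0\rangle = \langle d-1, d-1|N_k|0, 0\rangle$ vanish, because both $|0,0\rangle$ and $|d-1, d-1\rangle$ lie in $V$; after this simplification the inequality collapses to the purely diagonal bound
\begin{equation*}
\langle 0, d-1|N_k|0, d-1\rangle + \langle d-1, 0|N_k|d-1, 0\rangle \geq \tfrac{2}{d}, \qquad k = 1, \ldots, d.
\end{equation*}

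Summing this over $k = 1, \ldots, d$ produces a lower bound of $2$. On the other hand, $\sum_{k=1}^{d+1} N_k = I - \Pi_V$ together with $\langle 0, d-1|\Pi_V|0, d-1\rangle = |\langle 0, d-1|\chi_{d+1}^{(d)}\rangle|^2 = 1/d$ and $\langle d-1, 0|\Pi_V|d-1, 0\rangle = 0$ (for $d \geq 2$, since $0 \neq 2d-2$) give
\begin{equation*}
\sum_{k=1}^{d+1}\bigl[\langle 0, d-1|N_k|0, d-1\rangle + \langle d-1, 0|N_k|d-1, 0\rangle\bigr] = 2 - \tfrac{1}{d},
\end{equation*}
so the sub-sum over $k \leq d$ is at most $2 - 1/d < 2$ (since $N_{d+1} \geq 0$ contributes non-negatively), contradicting the lower bound and completing the argument. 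The main obstacle will be pinpointing the right subspace $U$: the vector $|0, d-1\rangle$ is exactly the unique overlap site where $|\chi_{d+1}^{(d)}\rangle$ leaks into the $d \otimes d$ computational subspace that supports the first $d$ states, and this small overlap of weight $1/d$ is precisely what shrinks the available $N_k$ budget below the threshold $2$ forced by the $d$ individual PPT constraints.
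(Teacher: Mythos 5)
Your argument is correct, and it reaches the contradiction by a genuinely different route from the paper's. The paper (which proves the more general Lemma \ref{lemma3} and treats Lemma \ref{lemma2} as a special case) runs a global trace count: with $P$ the projector onto the $d\otimes d$ block and $\widetilde M_k=PM_kP$, the operator inequality $-\frac1d P\le \Lambda_k^{\mathrm{T_A}}\le \frac1d P$ combined with $\widetilde M_k^{\mathrm{T_A}}\ge 0$ forces $\tr(\widetilde M_k)\ge d$ for each $k\le d$; since $\sum_k\tr(\widetilde M_k)=d^2$, this leaves $\tr(\widetilde M_{d+1})=0$, so $M_{d+1}$ lives entirely off the block and misses the $1/d$ of the stopper's weight at $|0,d-1\rangle$. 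You instead first pin down the structure $M_k=|\chi^{(d)}_k\rangle\langle\chi^{(d)}_k|+N_k$ with $N_k$ supported on $V^\perp$ (a step the paper never needs, and which is valid: $\tr(M_k\Lambda_j)=0$ with $M_k\ge0$ gives $M_k|\chi_j\rangle=0$, and $\tr((I-M_k)\Lambda_k)=0$ with $I-M_k\ge0$ gives $M_k|\chi_k\rangle=|\chi_k\rangle$), and then feed the PPT constraint a two-dimensional witness, the $(-1/d)$-eigenvector of $(|\chi^{(d)}_k\rangle\langle\chi^{(d)}_k|)^{\mathrm{T_A}}$ on $\spn\{|0,d-1\rangle,|d-1,0\rangle\}$; the off-diagonal terms indeed drop out because $|00\rangle,|d-1,d-1\rangle\in V$, leaving the diagonal bound $\langle 0,d-1|N_k|0,d-1\rangle+\langle d-1,0|N_k|d-1,0\rangle\ge 2/d$, which summed over $k\le d$ exceeds the budget $2-\frac1d$ supplied by $\sum_k N_k=I-\Pi_V$. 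Both proofs ultimately trade on the same two facts --- the $-1/d$ negativity of $\Lambda_k^{\mathrm{T_A}}$ and the $1/d$ leakage of the stopper into the maximally entangled block --- but your accounting is local (two diagonal entries at a single "leak site" and its mirror) where the paper's is global (the full trace over the block). The paper's version transfers verbatim to Lemma \ref{lemma3}; yours also generalizes, since conditions (ii)--(iii) there guarantee a site $|e_{j_0}f_{j_0}\rangle$ in the block with $e_{j_0}\ne f_{j_0}$, which is all your witness construction requires.
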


\medskip

\begin{proof}
For the last state $|\chi^{(d)}_{d+1}\rangle$, its component ``$|0, d-1\rangle$'' has nonzero overlap with the $d \otimes d$ product subspace
$$\text{span}\{|0\rangle, \cdots, |d-1\rangle\} \otimes \text{span}\{|0\rangle, \cdots, |d-1\rangle\}.$$
Hence, the lemma is a special case of Lemma \ref{lemma3}, for which a more general proof will be given.
\end{proof}

\bigskip

With Lemma \ref{lemma2}, we can now construct $d + 1$ genuinely nonlocal generalized GHZ states in system $\mathbb{C}^d \otimes \mathbb{C}^d \otimes \mathbb{C}^d$.

\medskip

\begin{Thm}\label{thm3}
\ In three-partite system $\mathbb{C}^d \otimes \mathbb{C}^d \otimes \mathbb{C}^d$ where $d \geq 4$, there exist $d + 1$ orthogonal generalized GHZ states which are genuinely nonlocal.
\end{Thm}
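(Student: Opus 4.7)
The plan is to take the first $d$ states to be the ``diagonal'' canonical generalized GHZ states
\[
|G_k\rangle = \frac{1}{\sqrt{d}} \sum_{j=0}^{d-1} \omega_d^{j(k-1)} \, |jjj\rangle, \qquad k = 1, \ldots, d,
\]
and adjoin a single extra generalized GHZ state
\[
|G_{d+1}\rangle = \frac{1}{\sqrt{d}} \sum_{j=0}^{d-1} |\sigma(j)\rangle_A \, |\tau(j)\rangle_B \, |\rho(j)\rangle_C
\]
built from three permutations $\sigma, \tau, \rho$ of $\{0, \ldots, d-1\}$. Since the first $d$ states all lie on $\mathrm{span}\{|jjj\rangle : j\}$, orthogonality of $|G_{d+1}\rangle$ to every $|G_k\rangle$ reduces to the simple requirement that no $j$ satisfies $\sigma(j) = \tau(j) = \rho(j)$---equivalently, that $|G_{d+1}\rangle$ contains no term of the form $|iii\rangle$.

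For genuine nonlocality, I would analyze each of the three bipartitions $A|BC$, $B|AC$, $C|AB$ separately and reduce each to Lemma \ref{lemma3} (of which Lemma \ref{lemma2} is a special case). Under $A|BC$, the first $d$ states collapse to $d$ maximally entangled states supported on the $d \otimes d$ product subspace $\mathbb{C}^d_A \otimes \mathrm{span}\{|jj\rangle_{BC} : j\}$. Lemma \ref{lemma3} then applies provided $|G_{d+1}\rangle$ has nonzero component inside this $d \otimes d$ subspace, and a direct inspection shows this amounts to the existence of some $j$ with $\tau(j) = \rho(j)$. By the obvious symmetry, the analogous conditions for $B|AC$ and $C|AB$ are the existence of some $j$'s with $\sigma(j) = \rho(j)$ and with $\sigma(j) = \tau(j)$, respectively.

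The problem is therefore reduced to exhibiting, for every $d \geq 4$, three permutations of $\{0, \ldots, d-1\}$ such that each of the three pairs coincides at some index, but the three never simultaneously coincide. A concrete uniform choice is: let $\sigma$ be the cyclic shift $j \mapsto j+1 \pmod d$, let $\tau$ be the transposition swapping $0$ with $1$, and let $\rho$ be the transposition swapping $1$ with $2$. Then $\sigma(0) = \tau(0) = 1$, $\sigma(1) = \rho(1) = 2$, and $\tau(j) = \rho(j) = j$ for every $j \geq 3$ (a range that is nonempty precisely because $d \geq 4$). A short case check on $j \in \{0, 1, 2\}$ and on $j \geq 3$ rules out any index at which all three permutations coincide.

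The main difficulty I expect lies in the reduction step---making sure Lemma \ref{lemma3} is strong enough to give PPT-indistinguishability despite the fact that $|G_{d+1}\rangle$ generally has substantial support \emph{outside} the relevant $d \otimes d$ subspace (and that the $d$ standard GHZ states together span only a $d$-dimensional, rather than the full $d^2$-dimensional, piece of that subspace). Once Lemma \ref{lemma3} is granted, each of the three bipartitions independently delivers PPT-indistinguishability, hence local indistinguishability, and genuine nonlocality follows immediately. Note that the hypothesis $d \geq 4$ enters solely through the $A|BC$ bipartition, which demands a $j \geq 3$ with $\tau(j) = \rho(j)$; the other two bipartitions would already be fine for $d \geq 3$.
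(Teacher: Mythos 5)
Your proposal is correct and follows essentially the same route as the paper: the same $d$ diagonal Fourier GHZ states plus one permutation-built state engineered to have a component inside the relevant $d\otimes d$ subspace for each of the three bipartitions, reduced to Lemma \ref{lemma3} exactly as the paper reduces to its special case Lemma \ref{lemma2}. Your uniform choice of $(\sigma,\tau,\rho)$ is a clean, $d$-independent version of the paper's ad hoc constructions for $d=4,5,6$, and it does verify all the needed hypotheses (distinctness of the tuples, one in-subspace component per bipartition, and no triple coincidence for orthogonality).
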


\medskip

\begin{proof}
For $d = 4$, we present 5 orthogonal generalized GHZ states in $\mathbb{C}^4 \otimes \mathbb{C}^4 \otimes \mathbb{C}^4$ as
\begin{equation}\begin{split}
& |\eta_1\rangle = \frac{|000\rangle + |111\rangle + |222\rangle + |333\rangle}{2}, \\
& |\eta_2\rangle = \frac{|000\rangle + i|111\rangle - |222\rangle - i|333\rangle}{2}, \\
& |\eta_3\rangle = \frac{|000\rangle - |111\rangle + |222\rangle - |333\rangle}{2}, \\
& |\eta_4\rangle = \frac{|000\rangle - i|111\rangle - |222\rangle + i|333\rangle}{2}, \\
& |\eta_5\rangle = \frac{|011\rangle + |202\rangle + |330\rangle + |123\rangle}{2}. \\
\end{split}\end{equation}
To prove their genuine nonlocality, first consider the bipartition AB$|$C: In the subspace $\mathcal{H}'_{\mathrm{AB}} \otimes \mathcal{H}_{\mathrm{C}}$, where $\mathcal{H}'_{\mathrm{AB}} = \text{span}\{|00\rangle, |11\rangle, |22\rangle, |33\rangle, |01\rangle, |20\rangle, |12\rangle\}_{\mathrm{AB}}$, the states are locally equivalent to $\{|\chi^{(4)}_{1}\rangle, \cdots, |\chi^{(4)}_{5}\rangle\}$ (case $d = 4$ for Lemma \ref{lemma2}) in the bipartite system $\mathbb{C}^4 \otimes \mathbb{C}^7$, so they are locally indistinguishable through AB$|$C. This fact is shown in FIG \ref{fig2}. \ For the other two bipartitions, the same argument can also be easily verified (for example, A$|$BC is also shown in FIG \ref{fig2}). Thus, the generalized GHZ states $\{|\eta_1\rangle, \cdots, |\eta_5\rangle\}$ are genuinely nonlocal.

Notice that the construction of the last state $|\eta_5\rangle$ is crucial for the above result: The first three components ``$|011\rangle$'', ``$|202\rangle$'' and ``$|330\rangle$'' of $|\eta_5\rangle$ are intentionally arranged in such a way that they have nonzero overlap with the $4 \otimes 4$ subspace
$$\text{span}\{|0\rangle, |1\rangle, |2\rangle, |3\rangle\} \otimes \text{span}\{|00\rangle, |11\rangle, |22\rangle, |33\rangle\},$$
in bipartitions A$|$BC, B$|$CA and C$|$AB respectively; The last component ``$|123\rangle$'' is just to ensure $|\eta_5\rangle$ to be a generalized GHZ state and to be orthogonal to others. With these, Lemma \ref{lemma2} can then be applied for each bipartition. Notably, such constructions are always possible for $d \geq 4$. Taking $d = 5$ for example, the last state can be constructed as
$$\frac{|011\rangle + |202\rangle + |334\rangle + |143\rangle + |420\rangle}{\sqrt{5}},$$
which is locally equivalent to $|\chi^{(5)}_{6}\rangle$ (the $d = 5$ case for Lemma \ref{lemma2}), in each of the three bipartitions. For $d = 6$, the last state can be constructed as
$$\frac{|011\rangle + |202\rangle + |334\rangle + |145\rangle + |450\rangle + |523\rangle}{\sqrt{6}}$$
and so forth for any larger $d$.
\end{proof}

\begin{figure}[t]
\centering
\includegraphics[scale=0.30]{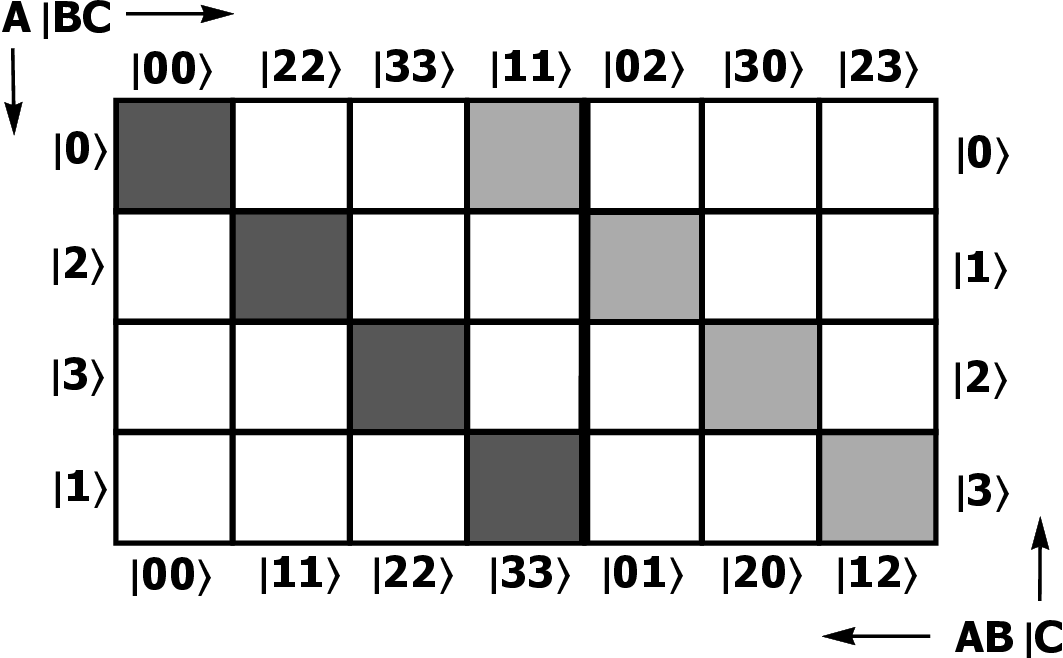}
\captionsetup{justification=raggedright}
\caption
{\small \ \ States $\{|\eta_1\rangle, \cdots, |\eta_5\rangle\}$ in bipartitions A$|$BC and AB$|$C. The dark tiles signify the four states $|\eta_1\rangle, \cdots,$ $|\eta_4\rangle$ and the grey ones signify $|\eta_5\rangle$. \ \ In both bipartitions, they are locally equivalent to $\{|\chi^{(4)}_{1}\rangle, \cdots, |\chi^{(4)}_{5}\rangle\}$ in system $\mathbb{C}^4 \otimes$ $\mathbb{C}^7$ (case $d = 4$ of Lemma \ref{lemma2}).
}\label{fig2}
\end{figure}

\medskip
To construct $d+1$ genuinely nonlocal generalized GHZ states in more-partite systems, we need a slightly more generalized version of Lemma \ref{lemma2}:

\begin{Lem}\label{lemma3}
\ \ In two-partite system $\mathbb{C}^{2d-1} \otimes \mathbb{C}^{2d-1}$ ($d \geq 2$), the following $d + 1$ orthogonal states are PPT-indistinguishable:
\begin{equation}\begin{split}\label{d+1}
& \ \ |\lambda_{k}\rangle = \frac{1}{\sqrt{d}} \sum_{j = 0}^{d-1} \omega_{d}^{j(k-1)} \ |jj\rangle, \ \ \ (1 \leq k \leq d, \ \ \omega_{d} = e^{\frac{2\pi i}{d}})\\
& and \\
& \ \ |\lambda_{d+1}\rangle = \frac{1}{\sqrt{d}} \sum_{j = 0}^{d-1}  \ |e_j f_j\rangle,
\end{split}\end{equation}
where the distinct $e_j$'s and the distinct $f_j$'s satisfy:
\begin{enumerate}
\item[\textnormal{(i) \ }] $0 \leq e_j, f_j \leq 2d-2$;
\item[\textnormal{(ii) \ }] There is at least one $j_0 \in \{0, \cdots, d-1\}$ such that $e_{j_0} \leq d-1$ and $f_{j_0} \leq d-1$;
\item[\textnormal{(iii) \ }] For those $j$ such that $e_j, f_j \ \leq \ d-1$, there must be $e_j \neq f_j$.
\end{enumerate}
\end{Lem}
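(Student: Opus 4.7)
The plan is a proof by contradiction: assume a PPT POVM $\{M_k\}_{k=1}^{d+1}$ with $\sum_k M_k = I$, $M_k \geq 0$ and $M_k^{T_A}\geq 0$ perfectly distinguishes the $d+1$ states, so $\langle \lambda_k|M_k|\lambda_k\rangle = 1$ for each $k$. Combining this with $M_k \leq I$ and the pairwise orthogonality of the $|\lambda_k\rangle$'s (for the nontrivial case $|\lambda_{d+1}\rangle \perp |\lambda_k\rangle$ with $k\leq d$, this orthogonality is exactly what condition~(iii) secures), the standard deduction yields $M_k|\lambda_k\rangle = |\lambda_k\rangle$ and $M_k|\lambda_{k'}\rangle = 0$ for every $k'\neq k$. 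These eigenvector relations are the only ``non-PPT'' input I will use; the PPT hypothesis is reserved for a single pivotal step.

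That pivotal step is to show that $M_{d+1}$ annihilates the entire canonical $d\otimes d$ product subspace $V_d := \spn\{|ij\rangle : 0\leq i,j\leq d-1\}$. For each $k \in \{1,\ldots,d\}$ the partial transpose $F_k := d\,|\lambda_k\rangle\langle\lambda_k|^{T_A} = \sum_{i,j=0}^{d-1}\omega_d^{(i-j)(k-1)}|ji\rangle\langle ij|$ is supported on $V_d$ and acts there as a twisted swap $|ab\rangle \mapsto \omega_d^{(a-b)(k-1)}|ba\rangle$, so $\|F_k\|_\infty = 1$. The identity $\tr(M_k^{T_A}F_k) = d\,\langle\lambda_k|M_k|\lambda_k\rangle = d$, combined with the sandwich bound $\tr(AB)\leq \|B\|_\infty \tr(A)$ applied to $A = P_{V_d}M_k^{T_A}P_{V_d}\geq 0$ and $B = F_k$, yields $\tr(M_k^{T_A}P_{V_d}) \geq d$; since $P_{V_d}$ is computational-basis diagonal, $P_{V_d}^{T_A} = P_{V_d}$, equivalently $\tr(M_k P_{V_d}) \geq d$. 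Summing over $k=1,\ldots,d$ and using $\sum_{k=1}^{d} M_k = I - M_{d+1}$ together with $\tr(P_{V_d}) = d^2$ gives $\tr(M_{d+1}P_{V_d}) \leq 0$, so the positivity of $M_{d+1}$ forces $M_{d+1}P_{V_d} = 0$.

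The endgame invokes condition~(ii): pick $j_0$ with $e_{j_0}, f_{j_0}\leq d-1$, so $|e_{j_0}f_{j_0}\rangle \in V_d$ and therefore $M_{d+1}|e_{j_0}f_{j_0}\rangle = 0$. Splitting $|\lambda_{d+1}\rangle = |\mu'\rangle + \tfrac{1}{\sqrt{d}}|e_{j_0}f_{j_0}\rangle$ with $|\mu'\rangle := \tfrac{1}{\sqrt{d}}\sum_{j\neq j_0}|e_j f_j\rangle$ (so $\|\mu'\|^2 = (d-1)/d$), the relation $M_{d+1}|\lambda_{d+1}\rangle = |\lambda_{d+1}\rangle$ collapses to $M_{d+1}|\mu'\rangle = |\lambda_{d+1}\rangle$. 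Taking the inner product with $\langle \mu'|$ (the cross term vanishes because the $|e_j f_j\rangle$'s are pairwise distinct computational-basis vectors) gives $\langle \mu'|M_{d+1}|\mu'\rangle = \|\mu'\|^2$, and $I - M_{d+1}\geq 0$ then forces $M_{d+1}|\mu'\rangle = |\mu'\rangle$. Hence $|\mu'\rangle = |\lambda_{d+1}\rangle$, contradicting that they differ by the nonzero vector $\tfrac{1}{\sqrt{d}}|e_{j_0}f_{j_0}\rangle$.

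I expect the principal obstacle to be the subspace-annihilation $M_{d+1}P_{V_d} = 0$: it does not follow from the PPT condition on $M_{d+1}$ alone, but rather from collectively applying the PPT conditions on $M_1,\ldots,M_d$, exploiting the sharp dimensional coincidence that all of their partial transposes $F_1,\ldots,F_d$ have the same $d^2$-dimensional support $V_d$, with trace counting that is exactly tight. Once this annihilation is in hand, the rest is clean linear algebra. The three hypotheses enter transparently: (i) makes the indices valid in $\mathbb{C}^{2d-1}$, (ii) plants a computational-basis component of $|\lambda_{d+1}\rangle$ inside $V_d$ where the annihilation can bite, and (iii) decouples $|\lambda_{d+1}\rangle$ from the canonical maximally entangled family.
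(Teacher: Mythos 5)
Your proposal is correct and follows essentially the same route as the paper's proof: the pivotal step of forcing $\tr(M_k P_{V_d})\ge d$ for $k\le d$ via the PPT condition and the operator-norm bound $\|\,d\,\Lambda_k^{\mathrm{T_A}}\|_\infty=1$ on the $d\otimes d$ subspace, then trace-counting against $\tr(P_{V_d})=d^2$ to conclude $M_{d+1}P_{V_d}=0$, is exactly the paper's argument. Only your endgame differs cosmetically — you use the eigenvector relation $M_{d+1}|\lambda_{d+1}\rangle=|\lambda_{d+1}\rangle$ and a splitting of $|\lambda_{d+1}\rangle$, where the paper directly bounds $\tr(M_{d+1}\Lambda_{d+1})\le\tr(P^{\bot}\Lambda_{d+1}P^{\bot})\le 1-\tfrac1d$ — but both hinge on condition (ii) in the same way.
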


\smallskip

For readability, the proof of this lemma is placed to the Appendix. Notice that we can always restrict our discussion on a minimal subspace $\mathbb{C}^{m} \otimes \mathbb{C}^{n}$ $(d \leq m, n \leq 2d - 1)$ on which $|\lambda_{1}\rangle, \cdots, |\lambda_{d+1}\rangle$ are supported (just as Lemma \ref{lemma2}). This obviously won't affect the correctness. 

\smallskip
Now we take the four-partite case for example. In the four-qudit system $(\mathbb{C}^{11})^{\otimes 4}$, where the standard orthogonal basis for each subsystem is denoted as $\{|0\rangle, \cdots, |9\rangle, |10\rangle \triangleq |z\rangle\}$, the following 12 orthogonal generalized GHZ states:
\begin{equation*}\begin{split}
\ \ |\delta_{k}\rangle & = \frac{1}{\sqrt{11}} \sum_{j = 0}^{10} \omega_{11}^{j(k-1)} \ |jjjj\rangle, \\
& \phantom{ = \frac{1}{\sqrt{11}} \sum_{j = 0}^{10} \delta_{11}^{j(k-1)}|jjjj\rangle} (1 \leq k \leq 11, \ \ \omega_{11} = e^{\frac{2\pi i}{11}}) \\
\end{split}\end{equation*}
and
\begin{equation}\begin{split}
\ |\delta_{12}\rangle & = \frac{1} {\sqrt{11}} (|0111\rangle + |2022\rangle + |3303\rangle + |4440\rangle \ \ \ \ \ \  \\
\ & \phantom{\frac{1} {\sqrt{11}} \ } + |5566\rangle + |7878\rangle + |9zz9\rangle + |1234\rangle \ \ \ \ \ \  \\
\ & \phantom{\frac{1} {\sqrt{11}} \ \ \ } + |678z\rangle + |8695\rangle + |z957\rangle) \ \ \ \ \ \ \\
\end{split}\end{equation}
are locally equivalent to the 12 indistinguishable states $\{|\chi^{(11)}_{1}\rangle, \cdots, |\chi^{(11)}_{12}\rangle\}$ indicated by Lemma \ref{lemma2}, \ in each of the ``$1-3$'' bipartitions. For the ``$2-2$'' bipartitions, take AB$|$CD as an example:  In FIG \ref{fig3}, \ it is shown that $\{|\delta_1\rangle, \ \cdots, \ |\delta_{12}\rangle\}$ are locally equivalent to the following orthogonal states in bipartite system $\mathbb{C}^{19} \otimes \mathbb{C}^{19}$:
\begin{equation*}\begin{split}
& |\nu_{k}\rangle = \frac{1}{\sqrt{11}} \sum_{j = 0}^{10} \omega_{11}^{j(k-1)} \ |jj\rangle, \ \ \ (1 \leq k \leq 11, \ \ \omega_{11} = e^{\frac{2\pi i}{11}})\\
\end{split}\end{equation*}
and
\begin{equation}\begin{split}
& |\nu_{12}\rangle = \frac{1}{\sqrt{11}} (|a1\rangle + |b2\rangle + |3a\rangle + |4b\rangle + |56\rangle + |cc\rangle + |dd\rangle\\
& \phantom{|\nu_{12}\rangle = } + |ee\rangle + |ff\rangle + |gg\rangle + |hh\rangle),\\
\end{split}\end{equation}
where $\{|0\rangle, \cdots, |9\rangle, |10\rangle \triangleq |z\rangle, |11\rangle \triangleq |a\rangle, |12\rangle \triangleq |b\rangle, \cdots,$ $|18\rangle \triangleq |h\rangle\}$ is the standard orthogonal basis for both subsystems. One can check routinely that $\{|\delta_1\rangle, \cdots, |\delta_{12}\rangle\}$ are also locally equivalent to $\{|\nu_1\rangle, \cdots, |\nu_{12}\rangle\}$ in  bipartitions AC$|$BD and AD$|$BC (under some permutation of the basis $\{|0\rangle, \cdots, |h\rangle\}$). By Lemma \ref{lemma3} however, $\{|\nu_1\rangle, \cdots, |\nu_{12}\rangle\}$ are locally indistinguishable. Therefore, the above four-partite generalized GHZ states in $(\mathbb{C}^{11})^{\otimes 4}$ are genuinely nonlocal.

\begin{figure}[t]
\centering
\includegraphics[scale=0.27]{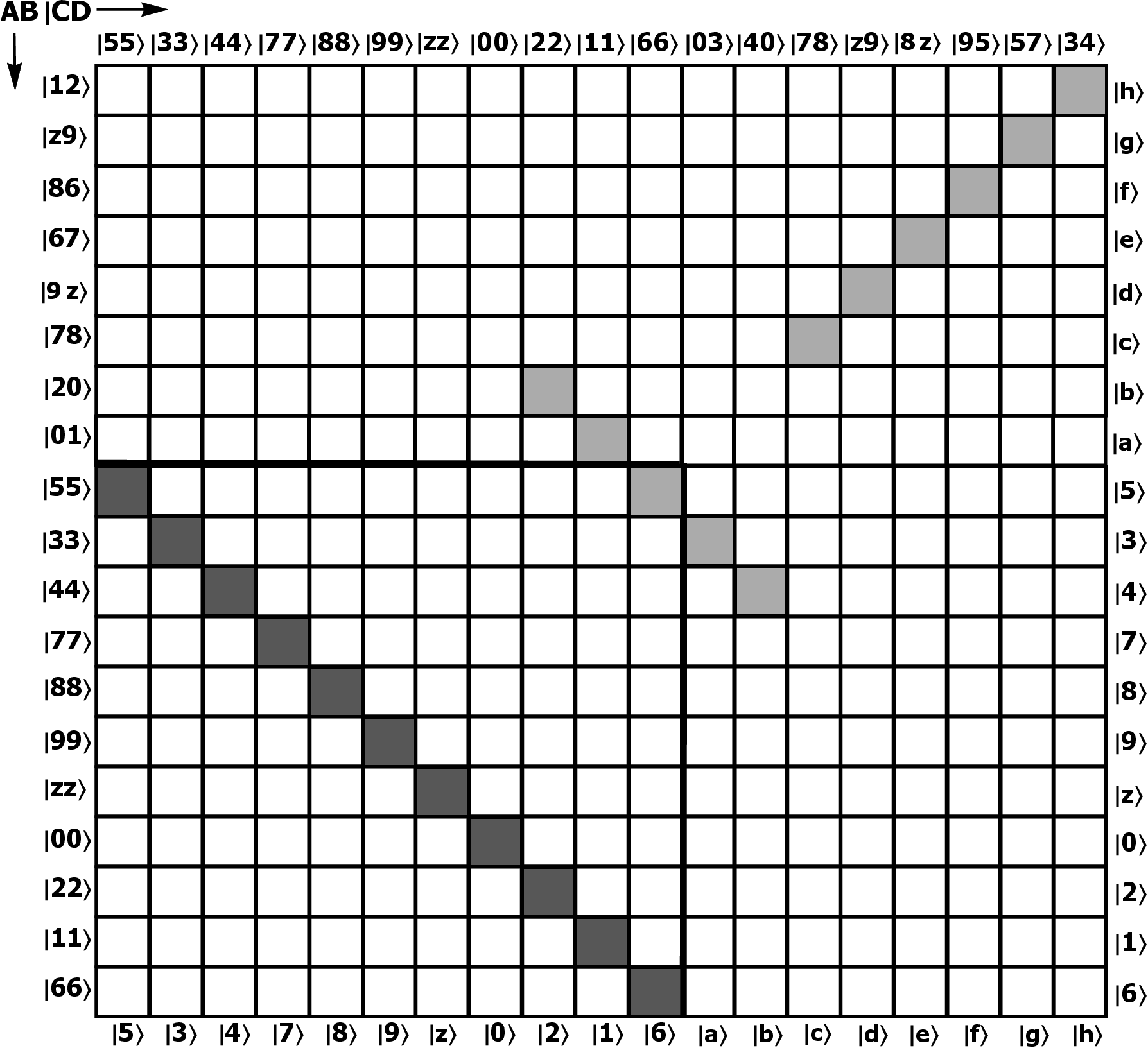}
\captionsetup{justification=raggedright}
\caption
{\small \ \ States $\{|\delta_1\rangle, \cdots, |\delta_{12}\rangle\}$ in the AB$|$CD bipartition. The dark tiles signify the eleven states \ $|\delta_1\rangle, \cdots,$ $|\delta_{11}\rangle$ \ and the grey ones represent $|\delta_{12}\rangle$. \ Such states are supported on $\mathbb{C}^{19} \otimes \mathbb{C}^{19}$ and they are locally equivalent to $\{|\nu_{1}\rangle, \cdots, |\nu_{12}\rangle\}$.
}\label{fig3}
\end{figure}

Similarly as the three-partite case, the components ``$|0111\rangle$'', ``$|2022\rangle$'', ``$|3303\rangle$'', ``$|4440\rangle$'' and ``$|5566\rangle$'', ``$|7878\rangle$'', ``$|9zz9\rangle$'' of the last state $|\delta_{12}\rangle$ here are intentionally arranged in such a way that they have nonzero overlap with the $11 \otimes 11$ subspaces
$$\text{span}\{|0\rangle, \cdots, |z\rangle\} \otimes \text{span}\{|000\rangle, \cdots, |zzz\rangle\}$$
and
$$\text{span}\{|00\rangle, \cdots, |zz\rangle\} \otimes \text{span}\{|00\rangle, \cdots, |zz\rangle\},$$
in the four ``$1-3$'' bipartitions and the three ``$2-2$'' bipartitions respectively. The remaining components are just to ensure $|\delta_{12}\rangle$ to be a generalized GHZ state and to be orthogonal to others. Also, such constructions can always be done when $d \geq 11$ and therefore we get a four-partite version of Theorem \ref{thm3}. \ Going a step further, for more general $N$-partite systems $(\mathbb{C}^d)^{\otimes N}$, similar constructions for the $(d+1)$-th generalized GHZ states are always possible as what we showed above, provided the local dimension $d$ is large enough. Therefore, by Lemma \ref{lemma3}, we have the following argument:
\begin{Thm}\label{thm4}
\ \ Given any number of parties $N$, there always exist $d + 1$ orthogonal generalized GHZ states in $(\mathbb{C}^d)^{\otimes N}$ which are genuinely nonlocal, in case that the local dimension $d$ is sufficiently large. 
\end{Thm}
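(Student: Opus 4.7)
The plan is to generalize the constructions used for $N=3$ (Theorem~\ref{thm3}) and the four-partite example worked out above. The first $d$ states will be the canonical phase states $|\delta_k\rangle = \frac{1}{\sqrt d} \sum_{j=0}^{d-1} \omega_d^{j(k-1)} |jj\cdots j\rangle$ for $1 \leq k \leq d$, and I will design a $(d+1)$-th generalized GHZ state $|\delta_{d+1}\rangle = \frac{1}{\sqrt d} \sum_{j=0}^{d-1} |\zeta_j^{(1)} \cdots \zeta_j^{(N)}\rangle$ so that, restricted to any bipartition $S|\overline S$, the resulting $d+1$ states live on some $\mathbb{C}^m \otimes \mathbb{C}^n$ ($d \leq m, n \leq 2d-1$) and look exactly like the configuration of Lemma~\ref{lemma3}. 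Genuine nonlocality then follows by applying Lemma~\ref{lemma3} bipartition-by-bipartition.

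The central device is to embed, for each of the $B = 2^{N-1}-1$ non-trivial bipartitions $S_i|\overline{S_i}$, one dedicated ``block tuple'' $(\zeta_i^{(1)}, \ldots, \zeta_i^{(N)})$ that is constant on $S_i$ (value $k_i$) and constant on $\overline{S_i}$ (value $l_i \neq k_i$). A concrete choice is $k_i = 2i-2$, $l_i = 2i-1$: these are pairwise distinct across all the blocks, so the $B$ block tuples form a partial Latin rectangle on $N$ columns. A short argument also shows that block $i$ is \emph{only} ``diagonal on both sides'' for its own bipartition $S_i|\overline{S_i}$: if it were simultaneously diagonal on both sides of some other $S_j|\overline{S_j}$, then each of $S_j, \overline{S_j}$ would have to lie entirely in $S_i$ or entirely in $\overline{S_i}$, forcing $S_j|\overline{S_j} = S_i|\overline{S_i}$ as unordered bipartitions.

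I would then extend these $B$ block tuples to the full collection of $d$ tuples by adjoining $d - B$ ``filler'' tuples, chosen so that (a) each of the $N$ columns becomes a permutation of $\{0,1,\ldots,d-1\}$ (making $|\delta_{d+1}\rangle$ a valid generalized GHZ state), (b) no filler tuple is constant (so $|\delta_{d+1}\rangle$ is orthogonal to every $|\delta_k\rangle$), and (c) no filler tuple is itself diagonal on both sides of any bipartition (so no new violation of condition (iii) of Lemma~\ref{lemma3} arises). Completion of a partial Latin rectangle is classical, and once $d$ is taken large enough relative to $N$ there is ample room to satisfy (b) and (c) as well: any tuple taking three or more distinct coordinate values is automatically non-constant and non-block, and such tuples abound in any completion when $d$ is large.

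Given such a $|\delta_{d+1}\rangle$, across each bipartition $S_i|\overline{S_i}$ the $d$ states $|\delta_k\rangle$ map to the $|\lambda_k\rangle$'s of Lemma~\ref{lemma3}, while $|\delta_{d+1}\rangle$ maps to $\frac{1}{\sqrt d}\sum_j |e_j f_j\rangle$, with block $i$ furnishing the index $j_0$ required by condition (ii) and with (i), (iii) guaranteed by the construction. Lemma~\ref{lemma3} then yields PPT-indistinguishability, hence local indistinguishability, across every bipartition. The main obstacle I anticipate is the purely combinatorial step (a)--(c): simultaneously forcing the fillers to complete every column while avoiding being constant or accidentally ``block-shaped'' for any of the $2^{N-1}-1$ bipartitions. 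A careful greedy argument, or an appeal to standard Latin-rectangle completion results via Hall's theorem, should close this for all sufficiently large $d$.
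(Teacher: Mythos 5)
Your proposal is correct and follows essentially the same route as the paper: the first $d$ states are the canonical phase states on the diagonal, and the $(d+1)$-th generalized GHZ state is engineered so that, in every bipartition, exactly the components needed sit diagonally on both sides with two distinct values, after which Lemma~\ref{lemma3} is applied bipartition by bipartition. The paper only exhibits this construction through the $N=3$ and $N=4$ examples and asserts its general feasibility for large $d$, so your explicit block-plus-filler, column-completion scheme is if anything a more systematic version of the same argument (and your condition (c) is actually redundant: a non-constant filler that happens to be block-shaped for some bipartition still has $e_j\neq f_j$ there, so it cannot violate condition (iii) of Lemma~\ref{lemma3}).
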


\smallskip

{\color{black}
Normally, although a large local dimension $d$ is required here to achieve the cardinality ``$d+1$'', \ for systems with smaller $d$, by taking advantage of Lemma \ref{lemma3}, genuinely nonlocal sets with modest cardinality can still be similarly constructed. \ Taking $N=4$ for example,  although $d \geq 11$ is required for cardinality ``$d+1$'' as we mentioned, \ for the sightly more compromised ``$d+2$'' (or ``$d+3$'', ``$d+4$''), \ one can easily observe that only local dimension $d \geq 6$ (resp. $d \geq 4$, \ $d \geq 3$) suffices.

The results above are of exceptional interest when considering the so called ``trivial orthogonality--preserving local measurements'' (TOPLM) technique. In recent literature, TOPLM has been frequently utilized as a technique for detecting genuinely nonlocal sets \cite{Halder19,Zhang19, Yuan20, Shi20, Wang21, Shi2022, Shi22, Zhou22, Zhou23, Li23}. In fact, genuinely nonlocal sets detected (constructed) through this way are referred to as ``strongestly nonlocal sets'' by some researchers \cite{strongest nonlocality}.  Also, whatever called ``strongly nonlocal sets'' are all constructed this way by far. Not long ago, it has been proved in \cite{Li23} that for any system $(\mathbb{C}^d)^{\otimes N}$, genuinely nonlocal sets constructed through TOPLM must have cardinality no smaller than $d^{N-1} + 1$. Here however, Theorem \ref{thm4} show the existence of $d+1$ genuinely nonlocal states  in $(\mathbb{C}^d)^{\otimes N}$, provided $d$ is sufficiently large. Needless to say, in such asymptotic situation, the disparity between $d^{N-1} + 1$ and $d+1$ is huge. \ On the other hand, even beyond the asymptotic sense, for those smallest systems, the contrasts are already conspicuous: For $(\mathbb{C}^4)^{\otimes 3}$, genuinely nonlocal set of $5$ states exists by Theorem \ref{thm3}, while only sets of $17$ can be detected with TOPLM at best; \ For $(\mathbb{C}^3)^{\otimes 4}$, genuinely nonlocal set of $7$ exists (as indicated by discussions in the last paragraph) while only sets of $28$ can be detected with TOPLM at best. Undoubtedly, such comparisons do illustrate an evident limitation of the TOPLM technique for detecting genuine nonlocality. With just TOPLM, we could possibly miss out quite a lot of interesting genuinely nonlocal sets, especially the small ones. In addition, here we also point out that the genuinely nonlocal sets constructed through TOPLM may be redundant. For example, in \cite{Wang21}, the following set of strongly nonlocal basis consisting of generalized  GHZ states in $(\mathbb{C}^4)^{\otimes 3}$ has been constructed through TOPLM: \ $\mathcal{S} = \bigcup_{i = 1}^{16} \mathcal{S}_i$ where
\small\begin{equation}\begin{split}\label{noncanonical}
& \mathcal{S}_1 = \left\{\left. |000\rangle + i^{k} |121\rangle + (-1)^{k} |212\rangle + (-i)^{k} |333\rangle \right| k \in \mathbb{Z}_3 \right\}, \\
& \mathcal{S}_2 = \left\{\left. |003\rangle + i^{k} |111\rangle + (-1)^{k} |222\rangle + (-i)^{k} |330\rangle \right| k \in \mathbb{Z}_3 \right\}, \\
& \mathcal{S}_3 = \left\{\left. |030\rangle + i^{k} |112\rangle + (-1)^{k} |221\rangle + (-i)^{k} |303\rangle \right| k \in \mathbb{Z}_3 \right\}, \\
& ~~~~~~~~~~~~~~~~~~~~~~~~\vdots \\
& \mathcal{S}_{16} = \left\{\left. |032\rangle + i^{k} |120\rangle + (-1)^{k} |213\rangle + (-i)^{k} |301\rangle \right| k \in \mathbb{Z}_3 \right\}. \phantom{\frac12_{\frac12}}\\
\end{split}\end{equation}\normalsize
Whereas by Lemma \ref{lemma3}, one can easily observe that only $6$ states among them suffice to exhibit genuine nonlocality. That is, in the sense of nontriviality, the other $58$ states among this set are redundant.}

\medskip

\section{Conclusions}\label{section5}
{\color{black}
In this work, we study the problem of constructing small genuinely nonlocal sets consisting of generalized GHZ states in multipartite systems. Specifically, we consider systems $(\mathbb{C}^d)^{\otimes N}$ in two asymptotic situations: (i) $d$ is fixed and $N$ is large; (ii) $N$ is fixed and $d$ is large. \ In the former case, considering the canonical generalized GHZ bases, we find that only a proportion $\Theta[1/(\frac{d}{\sqrt{2}})^N]$ of the states among such bases suffice to exhibit genuine nonlocality, whenever the local dimension $d$ are even. It's noteworthy that $\Theta[1/(\frac{d}{\sqrt{2}})^N]$ approaches zero rapidly as $N$ grows. As for the latter case, we show the existence of $d+1$ genuinely nonlocal generalized GHZ states in $(\mathbb{C}^d)^{\otimes N}$. Moreover, within and beyond the asymptotic sense, the latter result also indicates some evident limitations of the TOPLM technique for detecting small genuinely nonlocal sets. }

{\color{black} Recently, the notion ``strong nonlocality'' has been discussed a lot by many reseachers.  However, all existing examples of strongly nonlocal sets by far are constructed through the TOPLM indeed. Here, we conjecture that the $d+1$ genuinely nonlocal GHZ states in Theorem \ref{thm4} are strongly nonlocal. Or, to put it another way, we conjecture that the $d + 1$ locally indistinguishable states in Lemma \ref{lemma3} to be locally irreducible. Notably, since any three locally indistinguishable states are always locally irreducible \cite{Halder19}, the conjecture is true for $d = 2$ at least. It's reasonable to speculate that the same holds for $d > 2$, as the structure of the states are similar: the first $d$ ``conjugate states'' (up to Fourier transformation) attached with the last ``stopper'' state, which has nonzero overlap with the $d \otimes d$ subspace on which the ``conjugate states'' are supported.}

There are also other questions left to be considered. For exmaple, for the canonical generalized GHZ bases, do genuinely nonlocal subsets constructed in Corollary \ref{thm2} and Theorem \ref{prop3} have the minimal cardinality? For the multi-qubit case in particular, though we cannot provide a rigorous proof by far, we conjecture the cardinality $m_2(N)$ to be minimal. Besides, what is the situation for cases when $d$ is odd? {\color{black} Moreover, it's also interesting to consider other types of multipartite quantum states. With them, can other interesting small genuinely nonlocal sets be further constructed?}
\\
\\

\medskip

\centerline{\textbf{ACKNOWLEDGMENTS}}
\bigskip
{\color{black} We thank the anonymous referee for pointing out the recent work about local random authentication \cite{SBGhosh} to us.} The authors of this work are supported by the National Natural Science Foundation of China \ (with Grant No.62272492, No.12301020), the Guangdong Basic and Applied Basic Research Foundation (Grant No.2020B1515020050, No.2023A1515012074) and the Science and Technology Projects in Guangzhou (Grant No.2023A04J0027).
\\
\\

\medskip

\centerline{\textbf{APPENDIX}}

\bigskip

\centerline{\bf 1. Proof of Theorem \ref{prop3}}

\begin{proof}
\ Suppose that a proper subgroup $H < G'$ is given, with $|H| = 2^{t-1}$ $(1 \leq t \leq N)$ and $g_1, \cdots, g_c \in G'$ $(c = 2^{N-t} - 1)$ the representative elements of its cosets, as discussed in the proof of Lemma \ref{thm1}. For any $h \in H$, denote $\mathbf{h} = (h^{(1)}, h^{(2)}, \cdots, h^{(N)})$ (one of) the $N$-tuple such that $[\mathbf{h} ] = h$ and $\mathbf{g}_l = (g_l^{(1)}, g_l^{(2)}, \cdots, g_l^{(N)})$  such that $[\mathbf{g}_l ] = g_l$ $(1 \leq l \leq c)$. Let
\begin{equation*}
\mathcal{N}_t = \left\{\Xi_k^{(\mathbf{h})} \ | \ h \in H, \ 1 \leq k \leq d \right\} \cup \left\{\Xi^{\mathbf{g}_l} \ | \ 1 \leq l \leq c \right\}
\end{equation*}
where
\begin{equation*}\begin{split}
& \left|\Xi_k^{(\mathbf{h})}\right> =  \\
& ~~~~~\frac{1}{\sqrt{d}} \sum_{j=0}^{d-1} \omega_d^{j(k-1)} \left|j \oplus\frac{h^{(1)}d}{2}\right>  \otimes \cdots \otimes \left|j \oplus \frac{h^{(N)}d}{2}\right>~~~~~\\
& \phantom{~~~~~\frac{1}{\sqrt{d}} \sum_{j=0}^{d-1} \omega_d^{j(k-1)} \left|j \oplus\frac{h^{(1)}d}{2}\right>  \otimes \cdots \otimes ~~~} (\omega_d = e^{2\pi i/d}) \\
\end{split}\end{equation*}
and
\begin{equation*}\begin{split}
\left| \Xi^{\mathbf{g}_l} \right> = \frac{1}{\sqrt{d}} \sum_{j=0}^{d-1} \ \left|j \oplus\frac{g_l^{(1)}d}{2} \right> \otimes \cdots \otimes \left|j \oplus \frac{g_l^{(N)}d}{2}\right>. \ \ \ \ \ \ \\
\end{split}\end{equation*}
For any bipartition $S|\overline{S}$ where $\emptyset \subsetneq S \subsetneq \{1, 2, \cdots, N\}$, which is indicated by $p_{S|\overline{S}} \in G'$ (as described in the proof of Lemma \ref{thm1}), there must exist some $h_{S|\overline{S}} \in H$ such that $p_{S|\overline{S}} = h_{S|\overline{S}}$ or $p_{S|\overline{S}} = h_{S|\overline{S}} + g_l$ for certain $1 \leq l \leq c$. For the former case, the subset
$$\left\{\Xi_k^{(\mathbf{00\cdots 0})}, \ \Xi_k^{(\mathbf{h}_{S|\overline{S}})}: 1 \leq k \leq d\right\}$$
is locally equivalent to $2d$ maximally entangled states in the $d \otimes d$ subspace $\mathcal{H}^{S}_d \otimes \mathcal{H}^{\overline{S}}_d \subset (\mathbb{C}^d)^{\otimes |S|} \otimes (\mathbb{C}^d)^{\otimes |\overline{S}|}$ through $S|\overline{S}$, where $\mathcal{H}^{S}_d =$ span$\{|j\cdots j\rangle_S\}_{j=0}^{d-1}$ and $\mathcal{H}^{\overline{S}}_d =$ span$\{|j\cdots j\rangle_{\overline{S}}\}_{j=0}^{d-1}$; \ For the latter case, also in bipartition $S|\overline{S}$, the subset
$$\left\{\Xi^{\mathbf{g}_l}, \ \Xi_k^{(\mathbf{h}_{S|\overline{S}})}: 1 \leq k \leq d\right\}$$
is locally equivalent to $d + 1$ maximally entangled states in $d \otimes d$ subspace $\mathcal{H}^{S}_d \otimes \mathcal{H}^{\overline{S}}_d \subset (\mathbb{C}^d)^{\otimes |S|} \otimes (\mathbb{C}^d)^{\otimes |\overline{S}|} $, where
$$\mathcal{H}^{S}_d =  \text{span}\left\{\left.\bigotimes_{n \in S}\left|j \oplus\frac{h^{(n)}d}{2}\right> \right| 0 \leq j \leq d-1 \right\}$$
and
$$\mathcal{H}^{\overline{S}}_d =  \text{span}\left\{\left.\bigotimes_{n \in \overline{S}}\left|j \oplus\frac{h^{(n)}d}{2}\right> \right| 0 \leq j \leq d-1 \right\}.$$
Therefore, $\mathcal{N}_t$ is indistinguishable through bipartition $S|\overline{S}$ and so for the other bipartitions. Hence, $\mathcal{N}_t$ is genuinely nonlocal. Notice that $|\mathcal{N}_t| = 2^{t-1}d + 2^{N-t} - 1 = \lambda \cdot 2^{t + \lfloor \log_2 \frac d2\rfloor} + 2^{N-t} - 1$, where $1 \leq \lambda = d/2^{\lfloor \log_2 d\rfloor} < 2$. $|\mathcal{N}_t|$ achieve its minimum with $t_m = \lfloor \frac{N - \lfloor \log_2 \frac d2\rfloor}{2}\rfloor$ and $|\mathcal{N}_{t_m}|$ is just the cardinality $m_d(N)$ shown above.
\end{proof}
Notice that $t_m > 1$ whenever $N \geq \lfloor \log_2 d\rfloor + 3$, in which case $|\mathcal{N}_{t_m}|$ is smaller than $|\mathcal{N}_1| = d + 2^{N-1} - 1$, the cardinality given by Proposition \ref{1d}.
\\
\\

\medskip

\centerline{\bf 2. Proof of Lemma \ref{lemma3}}

\begin{proof}
\ Assume contrarily that $|\lambda_{1}\rangle, \cdots, |\lambda_{d+1}\rangle$ are PPT-distinguishable. Then by definition, there must exist PPT-POVMs $\{M_1, \cdots, M_{d+1}\}$ (which means $M_k,$ $M_k^{\mathrm{T_A}} \geq 0$ for all $1 \leq k \leq d+1$) such that
\begin{equation}\label{Mi}
M_1 + \cdots + M_{d+1} = \mathbb{I},
\end{equation}
and
\begin{equation}
\mathrm{Tr}(M_k \Lambda_k) = 1  \ \ \ \ \ \ (1 \leq k \leq d+1)
\end{equation}
where $\Lambda_k = | \lambda_k \rangle \langle \lambda_k |$'s are the corresponding density operators and $ \mathbb{I} $ is the identity on the whole space $\mathbb{C}^{2d-1} \otimes \mathbb{C}^{2d-1}$. Denote
\begin{equation*}
P = \sum_{i=0}^{d-1} |i\rangle\langle i| \ \otimes \ \sum_{j=0}^{d-1} |j\rangle\langle j|.
\end{equation*}
the projection onto product subspace $\mathcal{H}_{d \otimes d} = $ span$\{|0\rangle,$ $\cdots, |d-1\rangle\}$ $\otimes$ span$\{|0\rangle, \cdots, |d-1\rangle\}$. Denote further that $P^{\bot} = \mathbb{I} - P$ and $\widetilde{M}_k = P M_k P$. Since the states $|\lambda_{1}\rangle, \cdots, |\lambda_{d}\rangle$ are supported on $\mathcal{H}_{d \otimes d}$, we have $P \Lambda_k P = \Lambda_k$ and thus
\begin{equation*}\begin{split}
\mathrm{Tr}(\widetilde{M}_k \Lambda_k) & = \mathrm{Tr}(P M_k P \Lambda_k) \\
& = \mathrm{Tr}(M_k P \Lambda_k P) = \mathrm{Tr}(M_k \Lambda_k) = 1
\end{split}\end{equation*}
for $1 \leq k \leq d$. Also note that $| \lambda_1 \rangle, \cdots, | \lambda_d \rangle$ are of the form
\begin{equation*}
|\lambda_k \rangle = (I_d \otimes U_k) \ \frac{1}{\sqrt{d}} \sum_{j}^{d-1} |jj\rangle \ \ \ \ \ (1 \leq k \leq d)
\end{equation*}
where $I_d$ and $U_k$'s are respectively the identity and certain unitaries on span$\{|0\rangle, \cdots, |d-1\rangle\}$. Then
\begin{equation}\label{op1}
\Lambda_k^{\mathrm{T_A}} = (I_d \otimes U_k) \left( \frac{1}{d} \sum_{i, j = 0}^{d-1} |ij\rangle \langle ji| \right) (I_d \otimes U_k^{\dag})
\end{equation}
and hence there is
\begin{equation}\label{op2}
- \frac{1}{d} I_{d \otimes d} \ \leq \Lambda_k^{\mathrm{T_A}} \leq \ \frac{1}{d} I_{d \otimes d},
\end{equation}
where $I_{d \otimes d}$ is just the same thing as $P$. Here, the operators in (\ref{op1}) and (\ref{op2}) are all supported on subspace $\mathcal{H}_{d \otimes d}$ and $\sum_{i, j = 0}^{d-1} |ij\rangle \langle ji|$ has only eigenvalues $\pm 1$ on it. Therefore, by
\begin{equation}\label{ineq}
1 = \mathrm{Tr}(\widetilde{M}_k \Lambda_k) = \mathrm{Tr}(\widetilde{M}_k^{\mathrm{T_A}} \Lambda_k^{\mathrm{T_A}}) \leq \frac{\mathrm{Tr}(\widetilde{M}_k^{\mathrm{T_A}})}{d}
\end{equation}
we get $\mathrm{Tr}(\widetilde{M}_k) = \mathrm{Tr}(\widetilde{M}_k^{\mathrm{T_A}}) \geq d$ for $k \in \{1, \cdots, d\}$. The inequality in (\ref{ineq}) holds due to the fact that $\widetilde{M}_k^{\mathrm{T_A}} = P M_k^{\mathrm{T_A}} P \geq 0$. Now acting $P$ on both sides of (\ref{Mi}), we get
\begin{equation*}
\widetilde{M}_1 + \cdots + \widetilde{M}_{d+1} = I_{d \otimes d}
\end{equation*}
and therefore
\begin{equation*}
\mathrm{Tr}(\widetilde{M}_1) + \cdots + \mathrm{Tr}(\widetilde{M}_{d+1}) = d^2.
\end{equation*}
Since we have $\mathrm{Tr}(\widetilde{M}_k) \geq d$ for $k \in \{1, \cdots, d\}$, we get $\mathrm{Tr}(\widetilde{M}_{d+1}) = 0$ and thus $P M_{d+1} P = \widetilde{M}_{d+1} = 0$. Writing $M_{d+1}$ in its spectral decomposition form, one can further deduce $M_{d+1} P = P M_{d+1} = 0$ and therefore $M_{d+1} = (P + P^{\bot}) M_{d+1} (P + P^{\bot}) = P^{\bot} M_{d+1} P^{\bot}$. However, now we get:
\begin{equation*}\begin{split}
1 & = \mathrm{Tr}(M_{d+1} \Lambda_{d+1}) \\
& = \mathrm{Tr}(P^{\bot} M_{d+1} P^{\bot} \Lambda_{d+1}) \\
& = \mathrm{Tr}(M_{d+1} P^{\bot} \Lambda_{d+1} P^{\bot}) \\
& \leq  \mathrm{Tr}(P^{\bot} \Lambda_{d+1} P^{\bot}) \\
& \leq 1 - \langle e_{j_0} f_{j_0}| \Lambda_{d+1} |e_{j_0} f_{j_0}\rangle = 1 - \frac1d,
\end{split}\end{equation*}
which is an obvious contradiction. As a result, the states $|\lambda_{1}\rangle, \cdots, |\lambda_{d+1}\rangle$ are PPT-indistinguishable.
\end{proof}

\end{document}